\documentclass[copyright,creativecommons]{eptcs}

\usepackage{tikz}
\usetikzlibrary{calc,arrows}

\usepackage[utf8]{inputenc}
\usepackage{amsthm}
\usepackage{amssymb}
\usepackage{amsmath}
\usepackage{multirow}
\usepackage{rotating}
\usepackage{textcomp}
\usepackage{color}
\usepackage{fixltx2e}
\usepackage{ifdraft}
\usepackage{stmaryrd}
\usepackage{paralist}
\usepackage{bm}

\usepackage{graphicx}
\usepackage{caption}
\usepackage{subcaption}

\DeclareMathAlphabet{\mathcal}{OMS}{cmsy}{m}{n}

\newtheorem{theorem}{Theorem}
\newtheorem{lemma}[theorem]{Lemma}

\newtheorem{definition}[theorem]{Definition}
\newtheorem{example}[theorem]{Example}
% \newproof{proof}{Proof}

\newcommand{\blankline}{\vspace*{0.25\baselineskip}}

%%%%%
% 
%  Symbols
% 
%%%%%

\newcommand{\Sigmasigma}{\Sigma_{\mkern2mu \sigma}}
\newcommand{\lc}{\mathopen{\lbrace \;}}

\newcommand{\pisigmas}{\pi_{\mkern1mu \varsigma \mkern-2mu , s}}

\newcommand{\rc}{\mathclose{\; \rbrace}}
\newcommand{\sigmai}{\sigma_{\mkern-3mu i}}
\newcommand{\sigmaj}{\sigma_{\mkern-4mu j}}
\newcommand{\singleton}[1]{\lbrace {#1} \rbrace}
\newcommand{\thetaj}{\theta_{\mkern-4mu j}}
\newcommand{\transs}[1][]{\overset{\, #1\, }{\Longrightarrow}}

\newcommand{\scdl}{\varsigma}

\newcommand{\A}{\textsf{A}}
\newcommand{\B}{\mathop{\mathcal{B}}}

\newcommand{\calG}{\mathcal{G}}
\newcommand{\calR}{\mathcal{R}}

\makeatletter
\newif\if@@dist
\@@distfalse
\newcommand{\dist}[1]{\@@disttrue\bm{#1}\@@distfalse}
\makeatother

%%%%%
% 
% Operators 
% 
%%%%%

\newcommand{\rest}{\downharpoonright}
\renewcommand{\rest}{, \mkern-1mu}

\newcommand{\frags}{\textit{frags}}
\newcommand{\trace}{\textit{trace}}
\newcommand{\last}{\textit{lst}}
\newcommand{\first}{\textit{fst}}
\newcommand{\support}{\textit{spt}}

\newcommand{\SD}{\textit{SubDisc}}
\renewcommand{\SD}{\textit{SubD}}
\renewcommand{\SD}{\mathcal{SD}}
\renewcommand{\SD}{\Delta_{\textit{sub}}}

\newcommand{\length}{\textit{length}}

%%%%%
% 
% Sets and relations
% 
%%%%%

\newcommand{\bbisim}{\approx_b}
\newcommand{\nbbisim}{{\not \approx}_b}
\newcommand{\pbbisim}{\approx_{p\!b}}
\newcommand{\npbbisim}{{\not\approx}_{p\!b}}

\newcommand{\rbbisim}{\approx_{\textit{rb}}}
\newcommand{\notrbbisim}{{\not \approx}_{\textit{rb}}}
\newcommand{\nsbisim}{\sim_{b}}

\newcommand{\rel}{\mathrel{\textsf{R}}}
\newcommand{\relR}{\mathrel{\mathcal{C}}}
\newcommand{\relB}{\mathrel{\mathcal{B}}}

%%%%%
% 
%  LOGIC
% 
%%%%%

%%%%%
% 
%  Others
% 
%%%%%

\newcommand{\runningPTSS}{\tilde{P}}

\newcommand{\newtext}[1]{#1} 
%%%% Macros for technical part %%%%%%%%%%%%%%%%%%%%%%%%%%%%%%%%%%%%%%%%%%%%

% \newcommand{\bisim}{\sim}

 % Naturals
 % Integers
% \newcommand{\Q}{\mathbb{Q}} % Rationals
% \newcommand{\R}{\mathbb{R}} % Reals

% \newcommand{\powerset}{\mathscr{P}}

\newcommand{\dedrule}[2]{\frac{#1}{#2}}
\newcommand{\trans}[1][]{\xrightarrow{\, {#1} \, }}

\newcommand{\ntrans}[1][]{\mathrel{{\trans[#1]}\makebox[0em][r]{$\not$\hspace{2ex}}}{\!}}

\newcommand{\gtgeq}{\trianglerighteq}

\newcommand{\strans}[1][]{\stackrel{#1}{\longrightarrow}}
\newcommand{\tuple}[1]{\langle{#1}\rangle}

\newcommand{\rank}{\mathop{\textsl{rk}}}
\newcommand{\arity}{\mathop{\textsl{ar} \mkern1mu}}

\newcommand{\openT}{\mathbb{T}}

\newcommand{\openTerms}[1][]{\openT_{ #1 }(\Sigma)}
\newcommand{\openSTerms}{\openT(\Ssignature)}
\newcommand{\openDTerms}{\openT(\Dsignature)}
\newcommand{\closedSTerms}[1][]{T(\Sigma_{ \TS })}
\newcommand{\closedDTerms}[1][]{T(\Sigma_{ \DTS })}
\newcommand{\closedTerms}{T(\Sigma)}

\newcommand{\Var}{\mathop{\textsl{var}}}

\newcommand{\prem}[1]{\textsl{prem} \mkern1mu (#1)}
\newcommand{\conc}[1]{\textsl{conc} \mkern1mu (#1)}

\newcommand{\Red}[2]{\textrm{Red}}

\newcommand{\TVar}{\mathcal{V}}
\newcommand{\DVar}{\mathcal{D}}

\newcommand{\PTSS}{\mbox{\emph{PTSS}}}
\newcommand{\Tr}{\textsf{Tr}}
\newcommand{\CerTr}{{\textsf{CT}}}
\newcommand{\PosTr}{{\textsf{PT}}}

\newcommand{\nullproc}{0}

\newcommand{\Act}{A}
\renewcommand{\Act}{\mathcal{A}}

\newcommand{\ntmufxt}{\ensuremath{
  \textsf{n} \mkern0.5mu \textsf{t} \mkern0.5mu \mu \mkern0.5mu 
  \textsf{f} \mkern0.5mu \theta
  \mkern-1.5mu / \mkern-1.75mu 
  \textsf{n} \mkern0.5mu \textsf{t} \mkern0.5mu \mu \mkern0.5mu 
  \textsf{x} \mkern0.5mu \theta}}
%\newcommand{\entmufxt}{\ensuremath{\epsilon}-\ensuremath{\ntmuft/\epsilon}-\ensuremath{\ntmuxt}}

%\def\family#1#2[#3]{\left\{{#1}\right\}_{#2}^{#3}}

%\newcommand{\free}{\mathit{free}}
%\newcommand{\bound}{\mathit{bound}}

% Multiset open/close signs

%\newcommand{\congclosure}[1]{S_{#1}}
%\newcommand{\congclosureap}[1]{S_{#1}}

% \DeclareMathOperator{\last}{lst}

% \DeclareMathOperator{\source}{src}

\newcommand{\sem}[2]{\llbracket{#1}\rrbracket_{#2}}

\newcommand{\proves}{\vdash}

\newcommand{\TS}{s}
\newcommand{\DTS}{d}

\newcommand\restr[2]{{% we make the whole thing an ordinary symbol
  \left.\kern-\nulldelimiterspace % automatically resize the bar with \right
  #1 % the function
%  \vphantom{\big|} % pretend it's a little taller at normal size
  \right|_{#2} % this is the delimiter
  }}

\newcommand{\myparagraph}[1]{\addvspace{\medskipamount}\noindent\textbf{#1} }

\newcommand{\trgt}[1]{\text{tgt}(#1)}

\newtheorem*{theorem*}{Theorem}

\newcommand{\Ssignature}{\Sigma_\TS}
\newcommand{\Dsignature}{\Sigma_\DTS}

\title{Rooted branching bisimulation as a congruence for probabilistic
  transition systems}
\author{%
  Matias D. Lee\thanks{Supported by 2010--2401/001--001--EMA2 and EU\:7FP
    grant agreement 295261 (MEALS) and SECyT--UNC.} 
  \institute{FaMAF, UNC--CONICET \\ 
    C\'ordoba, Argentina}
  \email{lee@famaf.unc.edu.ar}
  \and
  Erik P. de Vink
  \institute{TU/e, Eindhoven, The Netherlands \\
  CWI, Amsterdam, The Netherlands}
\email{evink@win.tue.nl}
}

\begin{document}
\maketitle

\begin{abstract}
  We propose a probabilistic transition system specification format,
  referred to as \emph{probabilistic RBB safe}, for which rooted
  branching bisimulation is a congruence.
  The congruence theorem is based on the approach of Fokkink for the
  qualitative case.
  For this to work, the theory of transition system specifications in
  the setting of labeled transition systems needs to be extended to
  deal with probability distributions, both syntactically and
  semantically.
  We provide a scheduler-free characterization of probabilistic
  branching bisimulation as adapted from work of Andova et al.\ for
  the alternating model. 
  Counter examples are given to justify the various conditions
  required by the format. 
\end{abstract}

\setlength{\abovedisplayskip}{2pt}
\setlength{\belowdisplayskip}{2pt}

\section{Introduction}

Structural operational semantics is a standard methodology to provide a
semantics to programming languages and process
algebras~\cite{AFV99:handbook}.  
% This intepretation is given in terms of transition systems. 
% 
In this setting, a signature~$\Sigma$, a set of actions~$\Act$ and a
set of rules~$\calR$ define a transition system specification~(TSS).
The signature is used to define the terms of the language~$T(\Sigma)$.
The interpretation of each term is given by a labeled transition system
where the states range over~$T(\Sigma)$, the labels over~$\Act$ 
and the transitions are governed by the set of rules~$\calR$.
This technique has been widely studied in the context of qualitative
process algebras, see~\cite{MRG07} for an overview.
A particular topic of interest is the study of rule formats for which
a behavioral equivalence is guaranteed to be a congruence.
This property, crucial for a compositional analysis, ensures that
pairwise equivalent terms $t_1, \ldots, t_n$ and $t'_1, \ldots, t'_n$
give rise to the equivalence of the terms $f(t_1,\ldots t_n)$ and
$f(t'_1, \ldots, t'_n)$,

Much work has been done on formats for labeled transition systems,
i.e.\ in a qualitative setting.
Although the main focus is on strong bisimulation
\cite{GV92,Fok94:tacs,AFV99:handbook}, weaker notions that support
abstraction from internal behavior has been studied as well.
In \cite{Bloom95}, Bloom introduced formats covering weak
bisimulation, rooted weak bisimulation, branching bisimulation and
rooted branching bisimulation.
In \cite{vanG05} these results are extended for $\eta$-bisimulation
and delay bisimulations.  
In addition, in~\cite{Fok00}, a more liberal format is introduced for
rooted branching bisimulation, called RBB~safe.
%
% In other to provide richer programning languages and process algebras,
% they were provided of different features.
% 
% Between these features, we find the possibility of modelling
% internal or hidden actions and probabilistic behaviors.
% 
% Internal actions are used to model behavior that is not visible by
% the enviroment of a process; it also used to abstract behavior that
% is not relevant for a particular analysis.  On the other hand, the
% inclusion of probabilities allows to model systems with more
% details; for example it is possible to model that after executing an
% action the system stays in the same state with probability $0.1$ and
% jumps to another state with probability $0.9$.
%
% As would be expected, the inclusion of these features motivated the
% extension of the theory of structured operational semantic.

Going into the other direction, addition of quantitative information,
e.g.\ the inclusion of probabilities, allows to model systems in more
detail. 
%
%  for example it is possible to model that after executing an
% action the system stay in the same state with probability $0.5$ and
% jumps to another state with probability $0.5$.
%
The aim to include these features motivated further extension of the
theory of structured operational semantics.
In the context of probabilistic process algebras previous results in
the qualitative setting have been extended to deal with quantitative
decorations. Based on the category-theoretical framework of Turi
and~Plotkin \cite{Bart04}~proposes \emph{probabilistic GSOS}. In turn,
this has been generalized in~\cite{Kli09:mosses,KS13:ic} yielding
\emph{weighted GSOS}. For generative systems \cite{LT09}~introduces
\emph{GTTS}, a format allowing features like look-ahead. The format
\emph{ntyft/ntyxt} is translated to the probabilistic setting
in~\cite{DL12}, while generation of
axiomatizations~\cite{ABV94:ic,BV04:jlap} for probabilistic GSOS is
reported in~\cite{DGL14}. In~\cite{MP14} so-called
\emph{weight-function SOS} is proposed for \emph{ULTraS}, a
generalization of probabilistic transition systems.
The bottom line for all these works
is that strong bisimulation is a congruence for probabilistic
transition systems.

So far, no congruence format is given dealing with both additional
aspects, i.e.\ taking quantitative information into account, and, at
the same time, catering for a weak notion of bisimulation.
To the best of our knowledge, this paper presents the first
specification format for the quantitative setting that respects a weak
equivalence, viz.\ branching bisimulation.
Our work follows the approach initiated in~\cite{DL12,LGD12}
to define probabilistic transitions systems specifications ($\PTSS$). 
A~$\PTSS$ is composed of a two-sorted signature $(\Ssignature,
\Dsignature)$, a set of actions $\Act$ that contains the internal
action~$\tau$, and a set of rules~$\calR$.
The signature~$\Ssignature$ is used to represent states, while
$\Dsignature$~is used to represent distributions over states.
Transitions have the shape $t \trans[a] \theta$ with the following
intended meaning: state~$t$ can execute an action~$a$ and the next
state is selected using the state distribution~$\sem{\theta}{}$.
Thus, the interpretation of the distribution term~$\theta$ is a
probability distribution over states.
Note, this framework allows for non-determinism: if $t \trans[a]
\theta$ and $t \trans[a] \theta'$ we do not require~$\theta =
\theta'$.
Hence, a $\PTSS$ defines systems similar to Segala's probabilistic
automata~\cite{Seg95a}.
In order to prove that branching bisimulation is a congruence, we
follow the ideas presented in~\cite{Fok00}.
For that reason, we call our specification format the
\emph{probabilistic RBB safe} format, referring to the RBB safe format
proposed by Fokkink.
Often, working with probabilistic automata where non-determinism and
probabilities coexist, requires the introduction of schedulers to
determine the probability of a particular
execution~\cite{SL95,EHKTZ2013}.
A key point for our proof of the main result is a characterization of
branching bisimulation without schedulers.
This characterization was introduced for the alternating model
in~\cite{AGT2012}, and we have adapted it to our context.

In Section~\ref{sec:preliminaries}, two-sorted signatures and
distributions over states are introduced. 
Section~\ref{sec:ptss} introduces probabilistic transition system
specifications. 
Section~\ref{section:bisimulations} discusses schedulers, weak
transitions, probabilistic branching bisimulation and branching bisimulation. 
\newtext{
The second relation is a restricted version of the former. 
The format that we present deals with branching bisimulation.}
In Section~\ref{sec:format}, the probabilistic RBB safe specification
format is presented. We furthermore touch upon some intricacies of it
and give the congruence theorem.
In Section~\ref{sec:the_end} some conclusions are drawn and
future lines of research are indicated.

 %% file ptss.tex

% \section{Probabilistic Transition System Specifications}

\section{Preliminaries}
\label{sec:preliminaries}

We fix the set $S = \singleton{ \TS, \mkern1mu \DTS \mkern1mu }$
denoting two sorts. Elements of sort~$\TS$ are intended to represent
states in a transition system, while elements of sort~$\DTS$ will
represent distributions over states. We let $\sigma$ range
over~$S$. 

An $S$-sorted signature is a structure $( \Sigma , \arity )$, where
\begin{inparaenum}[(i)]
\item $\Sigma$~is a set of function names, and
\item $\arity \colon \Sigma \to S^\ast \times S$ is the arity
  function.
\end{inparaenum}
The rank of $f \in \Sigma$ is the number of arguments of~$f$,
i.e.\ $\rank(f) = n$ if $\arity(f) = \sigma_1 \cdots \sigma_n \to
\sigma$. We write $\sigma_1 \cdots \sigma_n \to \sigma$ instead of
$(\sigma_1 \cdots \sigma_n,\sigma)$ to highlight that function~$f$
maps to sort~$\sigma$. Function~$f$ is a constant if $\rank(f) =
0$. To simplify the presentation we will write an $S$-sorted
signature~$\Sigma$ as a pair of disjoint signatures $( \Ssignature,
\Dsignature )$ where $\Ssignature$ is the set of operations that map
to sort~$\TS$, and $\Dsignature$ is the set of operations that map to
sort~$\DTS$.

Let $\TVar$ and~$\DVar$ be two infinite sets of $S$-sorted variables,
for states and distributions, respectively,
where $\TVar$, $\DVar$, and~$\Sigma$ are all mutually disjoint. We use
$x, y, z$ (with possible decorations, subscripts or superscripts) to
range over~$\TVar$, $\mu,\nu$~to range over~$\DVar$ and $\zeta$~to
range over~$\TVar \cup \DVar$.

\medskip

\begin{definition} 
  \label{def:state_distribution_terms} 
  Let $V \subseteq \TVar$ and $D \subseteq \DVar$. We define the sets
  of \emph{state terms} $T(\Ssignature,V,D)$ and \emph{distribution
    terms} $T(\Dsignature,V,D)$ as the smallest sets of terms
  satisfying
  \begin{enumerate}[(i)]
  \item $V \subseteq T(\Ssignature, V, \mkern1mu D)$, and $D \subseteq
    T(\Dsignature, V, \mkern1mu D)$;
  \item $f(\xi_1, \cdots, \xi_{\rank(f)}) \in T(\Sigmasigma, V,
    \mkern1mu D)$ if
    $\arity(f) = \sigma_1 \cdots \sigma_n \to \sigma$ and $\xi_i \in
    T(\Sigma_{\mkern2mu \sigma \mkern-2mu {}_i},V, \mkern1mu D)$.
  \end{enumerate}
\end{definition}

\blankline

\noindent
We let $\openTerms = T(\Ssignature,\TVar,\DVar) \cup
T(\Dsignature,\TVar,\DVar)$ denote the set of all \emph{open terms}
and distinguish the sets $\openSTerms = T(\Ssignature,\TVar,\DVar)$ of
\emph{open state terms} and $\openDTerms = T(\Dsignature,\TVar,\DVar)$
of \emph{open distribution terms}.
Similarly, we let $\closedTerms = T(\Ssignature,\emptyset,\emptyset)
\cup T(\Dsignature,\emptyset,\emptyset)$ denote the set of all
\emph{closed terms} and distinguish the sets $\closedSTerms =
T(\Ssignature,\emptyset,\emptyset)$ of \emph{closed state terms} and
$\closedDTerms = T(\Dsignature,\emptyset,\emptyset)$ of \emph{closed
  distribution terms}.
We let $t$, $t'$, $t_1$,\ldots\ range over state terms, $\theta$,
$\theta'$, $\theta_1$,\ldots\ range over distribution terms, and
$\xi$, $\xi'$, $\xi_1$,\ldots\ range over any kind of terms.
We use $\Var(\xi) \subseteq \TVar \cup \DVar$ to denote the set of
variables occurring in term~$\xi$.

Let $\Delta(\closedSTerms)$ denote the set of all (discrete)
probability distributions over~$\closedSTerms$.
We let $\pi$ range over $\Delta(\closedSTerms)$.
For each $t \in \closedSTerms$, let $\delta_t \in
\Delta(\closedSTerms)$ denote the \emph{Dirac distribution}, i.e.,
$\delta_t(t)=1$ and $\delta_t(t')=0$ if $t$ and $t'$ are not
syntactically equal.
For $X \subseteq \closedSTerms$ we define $\pi(X) = \sum_{t\in X} \:
\pi(t)$. The convex combination $\sum_{i \in I} \: p_i \pi_i$ of a
(finite) family $\{\pi_i\}_{i \in I}$ of probability distributions
with $p_i \in (0,1]$ and $\sum_{i \in I} p_i = 1$ is defined by
$(\sum_{i \in I} \: p_i \pi_i)(t) = \sum_{i \in I} \: (p_i \pi_i(t))$.

The type of signatures we consider are of a particular form, to which
we refer to as ``probabilistic lifted signature''.  We start from a
signature~$\Ssignature$ of functions mapping into sort $\TS$ and
construct the signature~$\Dsignature$ of functions mapping into~$\DTS$
as follows.
For each~$f \in \Ssignature$ we include a function symbol $\dist{f}
\in \Dsignature$ with $\arity(\dist{f}) = \DTS \cdots \DTS \to \DTS$
and $\rank(\dist{f}) = \rank(f)$. We call $\dist{f}$ the
\emph{probabilistic lifting} of~$f$. (We use boldface fonts to
indicate that a function in $\Dsignature$ is the probabilistic lifting
of another in $\Ssignature$.)
Moreover $\Dsignature$ may include any of the following additional
operators:
\begin{itemize}
\item $\delta$ with arity $\arity(\delta) = \TS\to\DTS$ and

\item $\bigoplus_{i\in I}[p_i]\textvisiblespace$
  of arity
  $\arity\left(\bigoplus_{i\in I}[p_i]\textvisiblespace\right) =
  \DTS^{|I|} \to \DTS$
  for index set~$I$, 
  $p_i\in(0,1]$ for~$i\in I$,
  $\sum_{i \in I} p_i = 1$.
\end{itemize}
In particular, we write $\theta_1\oplus_{p_1}\theta_2$ instead of
$\bigoplus_{i\in\{1,2\}}[p_i]\theta_i$ when the index set~$I$ has
exactly two elements.

The operators $\delta$ and $\bigoplus_{i\in I}[p_i]\textvisiblespace$
are used to construct discrete probability functions of finite
support:
$\delta(t)$ is interpreted as the Dirac distribution for~$t$, and
$\bigoplus_{i\in I}[p_i]\theta_i$ represents a distribution that weights
with $p_i$ the distribution represented by the term $\theta_i$.
Moreover, a probabilistically lifted operator $\dist{f}$ is
interpreted by properly lifting the probabilities of the operands to
terms composed with the operator $f$.

Formally, the algebra associated with a probabilistically lifted
signature $\Sigma = (\Ssignature,\Dsignature)$ is defined as follows.
For sort $\TS$, it is the freely generated algebraic structure
$\closedSTerms$.  
For sort $\DTS$, it is defined by the carrier
$\Delta(\closedSTerms)$ and the following interpretation:
\begin{itemize}
  \item%
    $\sem{\delta(t)}{} = \delta_t$ for all $t\in \closedDTerms$
  \item%
    $\sem{\bigoplus_{i\in I}[p_i]\theta_i}{} =
     \sum_{i\in I} \: p_i \sem{\theta_i}{}$
    for $\lc \theta_i \mid i\in I \rc \subseteq \closedDTerms$

  \item%
    $\displaystyle
     \sem{\dist{f}( \theta_1, \ldots, \theta_{\rank(f)})}{}(g(\xi_1,
     \ldots, \xi_{\rank(g)})) =
       \begin{cases}
         \prod_{\sigmai=s} \: \sem{\theta_i}{}(\xi_i) 
            & \text{if $f \equiv g$ and for all $j$: $\sigmaj = d
              \Rightarrow \xi_j = \thetaj$} \smallskip \\
         0  & \text{otherwise}
       \end{cases}$
\end{itemize}
In the above definition we assume that $\prod\emptyset = 1$.
Notice that in the semantics of a lifted function $\dist{f}$, the
product at the right-hand side only considers the distributions
related to the \TS-sorted positions in $f$, while the distribution
terms corresponding to the \DTS-sorted positions in $f$ should match
exactly the parameters of~$\dist{f}$.

A substitution~$\rho$ is a map $\TVar \cup \DVar \to \openTerms$ such
that $\rho(x) \in \openSTerms$, for all $x \in \TVar$, and $\rho(\mu)
\in \openDTerms$, for all $\mu \in \DVar$.  A substitution is closed
if it maps each variable to a closed term. A substitution extends to a
mapping from terms to terms as usual.

\blankline

\begin{example}\label{ex:running:sig}
  To set the context of our running example we introduce the following
  signature.
  We assume a set $\Act$ of action labels, and distinguish $\tau \in \Act$.
  Let $\Sigma = ( \Ssignature, \Dsignature )$ be a probabilistically
  lifted signature where~$\Ssignature$ contains:
 %
%    \begin{itemize}
   \begin{inparaenum}[(i)] %% [\itshape a\upshape)]
   \item%
     constants $\nullproc$ (inaction or the stop process) of sort
     $\TS$, i.e., $\arity(\nullproc) = \TS$;
   \item%
     a family of unary probabilistic prefix operators
     $a.\textvisiblespace$ for~$a \in \Act$ with $\arity(a) = \DTS \to
     \TS$;
   \item%
     and a binary operator~$+$ (alternative composition or sum) 
     with  $\arity({+}) = {\TS}{\TS} \to \TS$.
%    \end{itemize}
 \end{inparaenum}
   Moreover, $\Dsignature$ contains $\delta$, all binary operators
   $\oplus_p$, and the lifted operators, which are as follows:
%    \begin{itemize}
   \begin{inparaenum}[(i)] %% [\itshape a\upshape)]
   \setcounter{enumi}{3}
   \item%
     the constant $\dist{\nullproc}$ with $\arity(\dist{\nullproc}) = \DTS$;
   \item%
    the family of unary operators $\dist{a.}\textvisiblespace$ for~$a
    \in \Act$ with $\arity(\dist{a}) = \DTS \to \DTS$;
   \item%
    the binary operator~$\dist{+}$ with $\arity(\dist{+}) =
    {\DTS}{\DTS} \to \DTS$.
%   \end{itemize}
   \end{inparaenum}

   The intended meaning of the probabilistic prefix
   operator~$a.\theta$ is that this term can perform action~$a$ and
   move to a term~$t$ with probability~$\sem{\theta}{}(t)$.
   The stop process~$0$ and alternative composition ${+}$ have their
   usual meaning.
\end{example}

\section{Probabilistic Transition System Specifications}
\label{sec:ptss} 
 
In our setting, a probabilistic transition relation prescribes what
possible activity can be performed by a term in a signature. Such
activity is described by an action and a probability distribution on
terms that indicates the probability to reach a particular new term.
Our definition is along the lines of probabilistic
automata~\cite{Seg95a}.

\begin{definition}[PTS]%
  A \emph{probabilistic labeled transition system} (PTS) is a triple
  $\A = ( \mkern1mu \closedSTerms, \, \Act, \, {\trans} \mkern1mu )$,
  where $\Sigma = (\Ssignature,\Dsignature)$ is a probabilistically
  lifted signature, $\Act$ is a set of actions, and
  ${\trans} \subseteq \closedSTerms \times \Act \times
  \Delta(\closedSTerms)$ is a transition relation.
\end{definition}

%% We write $t \trans[a] \pi$ for $(t, a, \pi) \in {\trans}$.
  
%% Transition relations are usually defined by means of structured
%% operational semantics in Plotkin's style~\cite{Plo04}.  Algebraic
%% characterizations of this style were provided
%% in~\cite{GV92,Gro93,BG96} where the term \emph{transition system
%%   specification} was used and which we adopt in our paper.
%% %
%% In fact, based on these works, we define \emph{probabilistic
%%   transition system specifications}.

\blankline

\noindent
We build on~\cite{GV92,Gro93,BG96} for our definition of a
probabilistic transition system specification.
 
\begin{definition}[PTSS]\label{def:ptss}%
  A \emph{probabilistic transition system specification} (PTSS) is a
  triple $P = ( \mkern1mu \Sigma, \, \Act, \, R \mkern1mu )$ where
  $\Sigma$~is a probabilistically lifted signature, $\Act$~is a set of
  labels, and $\calR$ is a set of rules of the form:
  \begin{displaymath}
    \dedrule{ \lc t_k \trans[a_k] \theta_k \mid k \in K \rc 
      \; \cup \;
      \lc t_\ell \ntrans[b_\ell] \mid \ell \in L \rc}{t \trans[a]
      \theta} 
  \end{displaymath}
   where $K, L$ are index sets, $t, t_k, t_\ell \in \openSTerms$, $a,
   a_k, b_\ell \in \Act$, and $\theta_k, \theta \in \openDTerms$.
 \end{definition}
 
\blankline

\noindent
Expressions of the form $t \trans[a] \theta$ and $t \ntrans[a]$ are
called \emph{positive literals} and \emph{negative literals},
respectively.
For any rule $r \in \calR$, literals above the line are called
\emph{premises}, notation $\prem{r}$; the literal below the line is
called the \emph{conclusion}, notation $\conc{r}$.
% 
% We denote with $\pprem{r}$ and $\nprem{r}$ the sets of
% positive and negative premises of the rule~$r$, respectively.
%
% A rule~$r$ without premises is called an~\emph{axiom}.
% 
% In general, we allow the sets of positive and negative premises to
% be infinite. 
%
Substitutions provide instances to the rules of a PTSS that, together
with some appropriate machinery, allow us to define probabilistic
transition relations. Given a substitution~$\rho$, it extends to
literals as follows:
% 
% \begin{displaymath}
  $\rho( \mkern1mu t \trans[a] \theta \mkern1mu ) = \rho(t) \trans[a]
  \rho(\theta)$ 
  and  
  $\rho( \mkern1mu t \ntrans[a]) = \rho(t) \ntrans[a]$.
% \end{displaymath}
We say that $r'$~is a (closed) instance of a rule~$r$ if there is a
(closed) substitution~$\rho$ such that~$r' = \rho(r)$.
  
For the sake of clarity, in the rest of the paper, we will deal with models as \emph{symbolic}
transition relations, i.e.\ subsets of $\closedSTerms \times \Act
\times \closedDTerms$ rather than \emph{concrete} transition relations
in $\closedSTerms \times \Act \times \Delta(\closedSTerms)$ required
by a PTS\@.
We will mostly refer with the term ``transition relation'' to a
symbolic transition relation.
In any case, a symbolic transition relation induces a unique concrete
transition relation by interpreting every target distribution term as
the distribution it defines. That is, the symbolic transition $t
\trans[a] \theta$ is interpreted as the concrete transition $t
\trans[a] \sem{\theta}{}$.  If the symbolic transition relation turns
out to be a model of a PTSS~$P$, we say that the induced concrete
transition relation defines a PTS associated to~$P$.
 
However, first we need to define an appropriate notion of a model. As
has been argued elsewhere (e.g.~\cite{Gro93,BG96,vG04}), transition
system specifications with negative premises do not uniquely define a
transition relation and different reasonable techniques may lead to
incomparable models.
For instance, the PTSS with the single constant~$f$, set of labels
$\singleton{a,b}$ and the two rules
%
% \begin{equation}
%  \label{eq:simple-example-of-multiple-models}%
  $\frac{ f \ntrans[b]}{ f \trans[a] \dist{f}}$
%   \qquad \text{ 
  and 
%   }\qquad
  $\frac{ f \ntrans[a]}{ f \trans[b] \dist{f}}$, 
% \end{equation}
% 
has two models that are justifiably compatible with the rules (so
called supported models~\cite{BIM95,BG96,vG04}), viz.\ 
$\{f\trans[a]\dist{f}\}$ and
$\{f\trans[b]\dist{f}\}$.
 
An alternative view is to consider so-called \emph{3-valued models}. A
3-valued model partitions the set $\closedSTerms \times \Act \times
\closedDTerms$ in three sets containing, respectively, the transitions
that are known to hold, that are known not to hold, and those whose
validity is unknown.
Thus, a 3-valued model can be presented as a pair
$\tuple{\CerTr,\PosTr}$ of transition relations $\CerTr, \PosTr
\subseteq {\closedSTerms \times \Act \times \closedDTerms}$, with
$\CerTr \subseteq \PosTr$, where $\CerTr$ is the set of transitions
that \emph{certainly} hold, and $\PosTr$ is the set of transitions
that \emph{possibly} hold.  So, transitions in $\PosTr \setminus
\CerTr$ are those whose validity is unknown and transitions in
$(\closedSTerms \times \Act \times \closedDTerms) \setminus \PosTr$
are those that certainly do not to hold.
In view of the above, a 2-valued model satisfies $\CerTr = \PosTr$.
 
A 3-value model $\tuple{\CerTr,\PosTr}$ that is justifiably compatible
with the proof system defined by a PTSS~$P$ is said to be \emph{stable}
for~$P$.
We will make clear what we mean by ``justifiably compatible'' in
Definition~\ref{def:3-valued-stable-model}.
Before formally defining the notions of a proof and 3-valued stable
model we introduce some notation.
Given a transition relation
$\Tr \subseteq \closedSTerms \times \Act \times \closedDTerms$,
$t \trans[a]\theta$ \emph{holds in} $\Tr$, notation $\Tr \models {t
  \trans[a]\theta}$, if ${t \trans[a] \theta} \in \Tr$;
$t \ntrans[a]$ \emph{holds in} $\Tr$, notation $\Tr \models {t
  \ntrans[a]}$, if for all $\theta\in\closedDTerms$, ${t \trans[a]
  \theta} \notin \Tr$.
Given a set of literals~$H$, we write $\Tr \models H$ if for all $\psi
\in H$, $\Tr \models \psi$.
 
\begin{definition}[Proof]
  \label{def:proof}
  Let $P = (\Sigma, \, \Act, \, \calR)$ be a PTSS\@.
  Let $\psi$ be a positive literal and let $H$ be a set of literals. 
   A \emph{proof} of a transition rule $\frac{H}{\psi}$ from~$P$ is a
   well-founded, upwardly branching tree where each node is a literal
   such that:
   \begin{enumerate}
   \item%
     the root is $\psi$, and 
   \item%
     if $\chi$ is a node and $K$ is the set of nodes directly above
     $\chi$, then one of the following conditions holds:
     \begin{enumerate}
     \item%
       $K = \emptyset$ and $\chi \in H$, or
     \item%
       $\frac{K}{\chi}$ is a valid substitution instance of a rule
       from~$\calR$.
     \end{enumerate}
   \end{enumerate}
   $\dedrule{H}{\psi}$ \emph{is provable} from~$P$, notation $P
   \proves \dedrule{H}{\psi}$, if there exists a proof of
   $\dedrule{H}{\psi}$ from~$P$.
\end{definition}

\blankline

\noindent 
Above we stated that a 3-value stable model $\tuple{\CerTr,\PosTr}$
for a PTSS~$P$ has to be \emph{justifiably compatible} with the proof
system defined by~$P$.  By `compatible' we mean that
$\tuple{\CerTr,\PosTr}$ has to be consistent with every provable rule.
With `justifiable' we require that for each transition in $\CerTr$
and~$\PosTr$ there is actually a proof that justifies it.
More precisely, we require that
\begin{inparaenum}[(a)]
\item%
  for every certain transition in~$\CerTr$ there is a proof in~$P$
  such that all negative hypotheses of the proof are known to hold
  (i.e.\ there is no possible transition in~$\PosTr$ denying a negative
  hypothesis), and
\item%
  for every possible transition in~$\PosTr$ there is a proof in~$P$
  such that all negative hypotheses possibly hold (i.e.\ there is no
  certain transition in~$\CerTr$ denying a negative hypothesis).
 \end{inparaenum}
This is formally stated in the next definition.

\blankline

\begin{definition}[3-valued stable model]
  \label{def:3-valued-stable-model}%
  Let $P = ( \mkern1mu \Sigma, \, \Act, \, \calR \mkern1mu )$ be a
  PTSS\@. 
  A tuple $\tuple{\CerTr,\PosTr}$ with $\CerTr \subseteq \PosTr
  \subseteq{\closedSTerms \times \Act \times \closedDTerms}$ is a
  \emph{3-valued stable model} for~$P$ if for every closed positive
  literal~$\psi$ it holds that
  \begin{enumerate}[(a)]
  \item%
    $\psi \in \CerTr$ iff there is a set $N$ of closed negative
    literals such that $P \proves \dedrule{N}{\psi}$ and
    $\PosTr\models N$
  \item%
     $\psi\in\PosTr$ iff there is a set $N$ of closed negative literals
     such that $P \proves \dedrule{N}{\psi}$ and $\CerTr\models N$.
  \end{enumerate}
\end{definition}
 
\blankline

\noindent
 In fact, the least 3-valued stable model of a PTSS can be constructed
 using induction. We borrow this construction from~\cite{Fok00,FV98}.
 
 \begin{lemma}\label{lem:inductive-construction-of-3v-least-model}
   Let $P$ be a PTSS\@.
   For each ordinal~$\alpha$ define the pair
   $\tuple{\CerTr_\alpha,\PosTr_\alpha}$ as follows:
   \begin{itemize}
   \item%
     $\CerTr_0 = \emptyset$ and
     $\PosTr_0 = \closedSTerms \times \Act \times \closedDTerms$. 
   \item%
     For every non-limit ordinal~$\alpha > 0$, define
     \begin{align*}
       \CerTr_\alpha & \textstyle
       = 
       \lc t \trans[a] \theta \mid
       \text{for some set~$N$ of negative literals, }
       P \proves \dedrule{N}{t \strans[a] \theta}
       \text{ and }
       \PosTr_{\alpha-1} \models N 
       \rc
       \\
       \PosTr_\alpha & \textstyle
       = 
       \lc t \trans[a] \theta \mid
       \text{for some set~$N$ of negative literals, }
       P \proves \dedrule{N}{t \strans[a] \theta}
       \text{ and }
       \CerTr_{\alpha-1}\models N 
       \rc
     \end{align*}
   \item%
     For every limit ordinal~$\alpha$, define
     $\CerTr_\alpha = \bigcup_{\beta<\alpha} \: \CerTr_\beta$ and
     $\PosTr_\alpha = \bigcap_{\beta<\alpha} \: \PosTr_\beta$.
   \end{itemize}
   Then it holds that
   \begin{itemize}
   \item \label{lem:inductive-construction-of-3v-least-model:inclusion}%
     If $\beta \leqslant \alpha$, $\CerTr_\beta\subseteq\CerTr_\alpha$ and
     $\PosTr_\beta \supseteq \PosTr_\alpha$, that is,
     $\tuple{\CerTr_\beta,\PosTr_\beta}$ has at most as much
     information as $\tuple{\CerTr_\alpha,\PosTr_\alpha}$.
   \item\label{lem:inductive-construction-of-3v-least-model:least}%
     There is an ordinal~$\lambda$ such that $\CerTr_\lambda =
     \CerTr_{\lambda+1}$ and $\PosTr_\lambda = \PosTr_{\lambda+1}$.
     Moreover, $\tuple{\CerTr_\lambda,\PosTr_\lambda}$ is the least
     3-valued stable model for~$P$.
   \end{itemize}
 \end{lemma}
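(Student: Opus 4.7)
The plan is to handle both bullets by transfinite induction on ordinals, using the stability conditions in Definition~\ref{def:3-valued-stable-model} crucially for the second bullet.

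For the monotonicity claim, I would fix $\beta$ and induct on $\alpha \geq \beta$. The base case $\alpha = \beta$ is trivial. For the successor case, suppose $\CerTr_\beta \subseteq \CerTr_\alpha$ and $\PosTr_\beta \supseteq \PosTr_\alpha$. If $t \trans[a] \theta \in \CerTr_\beta$, then there exists a set $N$ of negative literals with $P \proves \dedrule{N}{t \trans[a] \theta}$ and $\PosTr_{\beta-1} \models N$. The key observation is that $\PosTr \models t \ntrans[b]$ is antitone in $\PosTr$: a larger $\PosTr$ satisfies fewer negative literals. So I would first show in a sub-lemma that $\PosTr_\beta \supseteq \PosTr_\alpha$ implies $\PosTr_\beta \models N \Rightarrow \PosTr_\alpha \models N$, giving $t \trans[a] \theta \in \CerTr_{\alpha+1}$ via the same proof. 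The argument for $\PosTr_{\alpha+1} \subseteq \PosTr_\beta$ is dual. The limit case follows directly from the union/intersection definition. A bit of care is needed for successor vs.\ limit of $\beta$, but this is a standard transfinite bookkeeping.

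For the existence of the stabilizing ordinal $\lambda$, I would note that $\CerTr_\alpha \subseteq \PosTr_\alpha \subseteq \closedSTerms \times \Act \times \closedDTerms$ is a fixed set, so the non-decreasing chain $\{\CerTr_\alpha\}$ and non-increasing chain $\{\PosTr_\alpha\}$ must both stabilize at some ordinal of cardinality bounded by $2^{|\closedSTerms \times \Act \times \closedDTerms|}$. Taking $\lambda$ beyond this point gives $\CerTr_\lambda = \CerTr_{\lambda+1}$ and $\PosTr_\lambda = \PosTr_{\lambda+1}$. Verification that $\tuple{\CerTr_\lambda, \PosTr_\lambda}$ is a 3-valued stable model for~$P$ then amounts to unfolding the fixed point: $\psi \in \CerTr_\lambda$ iff $\psi \in \CerTr_{\lambda+1}$ iff some $N$ exists with $P \proves \dedrule{N}{\psi}$ and $\PosTr_\lambda \models N$, which is exactly condition~(a) of Definition~\ref{def:3-valued-stable-model}, and symmetrically for~(b).

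Leastness requires a second transfinite induction. Let $\tuple{\CerTr', \PosTr'}$ be any 3-valued stable model of~$P$; I would prove $\CerTr_\alpha \subseteq \CerTr'$ and $\PosTr_\alpha \supseteq \PosTr'$ for all~$\alpha$, whence the conclusion for $\alpha = \lambda$. The base and limit cases are immediate. For the successor case, if $\psi \in \CerTr_{\alpha+1}$, then $P \proves \dedrule{N}{\psi}$ with $\PosTr_\alpha \models N$; by the inductive hypothesis $\PosTr_\alpha \supseteq \PosTr'$ and antitonicity of $\models$ on negative literals give $\PosTr' \models N$, so by stability of $\tuple{\CerTr', \PosTr'}$ we obtain $\psi \in \CerTr'$. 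The inclusion $\PosTr' \subseteq \PosTr_{\alpha+1}$ follows dually from $\CerTr_\alpha \subseteq \CerTr'$ and clause~(b) of stability.

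The main obstacle is the antitonicity argument for negative literals, which must be handled carefully because $\models$ behaves inversely for positive and negative hypotheses; once that is isolated as a small auxiliary observation, the two transfinite inductions proceed smoothly. Routine ordinal and cardinality calculations for stabilization of $\lambda$ can be cited from~\cite{Fok00,FV98} rather than reproduced in detail.
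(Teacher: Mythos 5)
Your proposal is essentially sound and in fact more explicit than the paper, which does not spell the proof out: it defers to \cite{FV98,Fok00}, remarking only that the first item can be shown by transfinite induction on the lexicographic order of the pair $(\alpha,\beta)$ and that the second item follows from the Knaster-Tarski theorem. Your handling of the second item takes a genuinely different, more elementary route: instead of appealing to Knaster-Tarski you obtain $\lambda$ by a cardinality/stabilization argument on the monotone chains, check stability by unfolding the construction at the fixed point, and prove leastness by a separate transfinite induction against an arbitrary stable model $\tuple{\CerTr',\PosTr'}$, with antitonicity of satisfaction of negative literals doing the work in both directions; this is correct and buys a self-contained argument at the cost of the ordinal bookkeeping the lattice-theoretic citation hides. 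One omission: Definition~\ref{def:3-valued-stable-model} also requires $\CerTr_\lambda \subseteq \PosTr_\lambda$, which your fixed-point unfolding does not deliver by itself; it needs a short additional induction showing $\CerTr_\alpha \subseteq \PosTr_\alpha$ for all $\alpha$ (again via antitonicity, using monotonicity at limit stages).

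For the first item, however, the induction as you set it up does not close. With $\beta$ fixed and the hypothesis at stage $\alpha$ being only $\CerTr_\beta \subseteq \CerTr_\alpha$ and $\PosTr_\alpha \subseteq \PosTr_\beta$, the successor step needs $\PosTr_\alpha \subseteq \PosTr_{\beta-1}$: a transition in $\CerTr_\beta$ comes with a witness set $N$ satisfying $\PosTr_{\beta-1} \models N$, not $\PosTr_\beta \models N$ as your application of the antitonicity sub-lemma assumes, and to place it in $\CerTr_{\alpha+1}$ you must transfer satisfaction of $N$ from $\PosTr_{\beta-1}$ to $\PosTr_\alpha$. The missing inclusion $\PosTr_\beta \subseteq \PosTr_{\beta-1}$ is itself an instance of the statement for a different second coordinate, so it is not available under your fixed-$\beta$ hypothesis; this is more than successor-versus-limit bookkeeping. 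The repair is exactly the strengthening the paper hints at: prove by induction on $\alpha$ (equivalently, on the lexicographic order of $(\alpha,\beta)$) the claim for all $\beta \leqslant \alpha$ simultaneously; then $\beta \leqslant \alpha+1$ gives $\beta-1 \leqslant \alpha$, and the strengthened hypothesis supplies $\PosTr_\alpha \subseteq \PosTr_{\beta-1}$ directly. With that adjustment, and the limit cases you already indicate, your argument for the first item goes through as well.
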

 
\blankline

\noindent
This result is shown in~\cite{FV98,Fok00} for a non-probabilistic
setting using a slightly different definition of 3-valued models.
However, with minor changes the same proof applies to our setting as
well.
We note that the first item of the lemma can be proved using
transfinite induction on the lexicographic order of the
pair~$(\alpha,\beta)$.  the second item follows from the
Knaster-Tarski theorem.
PTSS with a least 3-valued stable model that are also a 2-valued model
are of particular interest, since such a model is actually the only
3-valued stable model~\cite{BG96,vG04}.

\blankline

\begin{definition}
  \label{def:complete-PTSS}%
  A PTSS~$P$ is said to be \emph{complete} if its least 3-valued
  stable model $\tuple{\CerTr,\PosTr}$ satisfies that $\CerTr = \PosTr$
  i.e., the model is also 2-valued.
 \end{definition}

%  PTSSs not containing rules with negative premises as well as
%  stratifiable PTSSs~\cite{DL12,Gro93} are complete~\cite{BG96,vG04}.
 
\blankline

\noindent
Now, we can associate a probabilistic transition system to a complete
PTSS\@.

\blankline

\begin{definition}
  \label{def:associated-model}
  Let $P$ be a complete PTSS and let $\tuple{\Tr,\Tr \mkern2mu }$ be
  its unique 3-valued stable model.
  We say that $\Tr$~is the transition relation associated to~$P$.  We
  also define the PTS associated to~$P$ as the unique PTS
  $( \mkern1mu \closedSTerms, \, \Act, \, {\trans} \mkern1mu )$ such
  that 
   $t \trans[a] \pi$ iff
   $t \trans[a] \theta \in \Tr$ and $\sem{\theta}{}=\pi$ for
  some~$\theta \in \closedDTerms$. 
\end{definition}
 
\blankline

\begin{example}
  \label{ex:running:rules}
  The rules for the process algebra of Example~\ref{ex:running:sig}
  are defined by
  \begin{displaymath}
    \dedrule{}{a.\mu \trans[a] \mu}
    \qquad \qquad \qquad
    \dedrule{x\trans[a]\mu}{{x+y}\trans[a]\mu}
    \qquad \qquad \qquad
    \dedrule{y\trans[a]\mu}{{x+y}\trans[a]\mu}
    \smallskip
  \end{displaymath}%
  We use $\runningPTSS$ for the PTSS defined by these rules. 
\end{example}  

\section{Branching bisimulations for probabilistic transition systems} 
\label{section:bisimulations}

For a set~$X$, we denote by $\SD(X)$ the set of discrete
sub-probability distributions over~$X$.
Given $\pi \in \SD(X)$, we denote by $\support(\pi)$ its support $\lc
x \in X \mid \pi(x) > 0 \rc$, and by $\pi(\bot)$ the value $1 -
\pi(X)$, for a distinguished symbol~$\bot \not \in X$.
We use $\delta_\bot$ to represent the empty distribution,
i.e.\ $\delta_\bot(X) = 0$.

An \emph{execution fragment} of a PTS~$\A$ is a finite or infinite
alternating sequence of states and actions $\alpha = s_0 \mkern2mu a_1
s_1 a_2 s_2 \ldots$ such that $s_{i-1}\trans[a_i] \pi_i$ and $s_i \in
\support(\pi_i)$, for each~$i > 0$.  We say, $\alpha$ is starting from
$\first(\alpha) = s_0$, and in case the sequence is finite, ending
in~$\last(\alpha)$. Put $\length(\alpha) = n$ if $\alpha = s_0 a_1
s_1, \ldots a_n s_n$, and $\length(\alpha) = \infty$ if $\alpha$ is
infinite. We write $\frags(\A)$ for the set of execution fragments
of~$\A$, and by $\frags^\ast(\A)$ the set of finite execution
fragments of~$\A$.  An execution fragment~$\alpha$ is a prefix of an
execution fragment~$\alpha'$, denoted by $\alpha \preccurlyeq
\alpha'$, if the sequence~$\alpha$ is a prefix of the
sequence~$\alpha'$.  The trace~$\trace(\alpha)$ of~$\alpha$ is the
subsequence of external actions of~$\alpha$. We use~$\varepsilon$ to
denote the empty trace. Similarly, we define~$\trace(a) = a$, for~$a
\in \Act$, and~$\trace(\tau) = \varepsilon$.

\newtext{
Given a state $s$, we write $\vec{s}$ %% 
to denote the set of outgoing transitions of the state $s$.
A \emph{scheduler} for a PTS~$\A$ is a function $\scdl :
\frags^\ast(\A) \to \SD(\trans)$ with $\scdl(\alpha) \in
\SD(\overrightarrow{\last(\alpha)})$. 
A scheduler $\scdl$ is \emph{deterministic} if for all $\alpha \in
\frags^\ast(\A)$ $\scdl(\alpha)$ is either a Dirac distribution or the
empty distribution.}
Note that by using sub-probability distributions, it is
possible that with non-zero probability no transition is chosen
after~$\alpha$, that is, the computation stops with
probability~$\scdl(\alpha)(\bot)$.  
Given a scheduler~$\scdl$ and a
finite execution fragment~$\alpha$, the distribution~$\scdl(\alpha)$
describes how transitions are chosen to move on from~$\last(\alpha)$.
A scheduler~$\scdl$ and a state~$s$ induce a probability distribution
$\pisigmas$ over execution fragments on measurable sets as generated
by the cones of finite execution fragments. The cone~$C_{\alpha}$ of a
finite fragment~$\alpha$ is the set $\lc \alpha' \in \frags(\A) \mid
\alpha \preccurlyeq \alpha' \rc$. With respect to~$\scdl$ and~$s$,
the probability~$\pisigmas$ of the cone~$C_{\alpha}$ is recursively
defined by
\begin{align*}
  \pisigmas(C_t) & = \delta_s(t) \\
  \pisigmas(C_{\alpha{at}}) & = 
  \textstyle
  \pisigmas(C_{\alpha}) \cdot \sum_{\last(\alpha)\trans[a]
    \pi} \: \scdl(\alpha)(\last(\alpha)\trans[a] \pi)\cdot \pi(t)
\end{align*}
Given a scheduler~$\scdl$, a state~$s$ and a finite execution
fragment~$\alpha$, the \emph{probability of executing~$\alpha$} based
on~$\scdl$ and~$s$, notation~$\pisigmas(\alpha)$, is defined as
$\pisigmas(\alpha) = \pisigmas(C_\alpha) \cdot \scdl(\alpha)(\bot)$.
%% EV: I guess below we want to have \neq
We define the length of a scheduler~$\scdl$ with respect to a
state~$s$ by $\length_s(\scdl) = \max \lc \length(\alpha) \mid
\first(\alpha)=s ,\, \scdl(\alpha) \neq \delta_{\bot} \rc$.  Given a
scheduler~$\scdl$, we may define a truncation~$\scdl_n$ by
$\scdl_n(\alpha) = \scdl(\alpha)$ if $\length(\alpha) \leqslant n$,
otherwise $\scdl_n(\alpha) = \delta_\bot$.

We say that a state~$s$ can execute a weak transition for action~$a
\in \Act$ if there is a scheduler~$\scdl$ such that the action~$a$ is
executed with probability~$1$.
%% EV: I guess this is superfluous
%% and no other visible action is executed.  
After~$a$ is executed, with probability $\pisigmas (\lc \alpha\in
\frags^\ast(\A) \mid \last(\alpha) = t \rc)$ the state~$t$ will be
reached. If~$a = \tau$, we have a similar definition but with
probability~$1$ no visible action is executed.

\blankline

\begin{definition}
  \label{def:combined-trans}
  Let~$\A$ be a PTS with $s \in S$ and $a\in \Sigma \cup\{\varepsilon\}$. A transition
  $s \transs[a]_c \pi$ is called a \emph{weak combined} transition if
  there exists a scheduler~$\scdl$ such that the following holds for
  $\pisigmas$:
  \begin{enumerate}
  \item $\pisigmas( \, \frags^\ast(\A) \, )= 1$.
  \item For each~$\alpha \in \frags^\ast(\A)$, if $\pisigmas(\alpha) >
    0$ then $\trace(\alpha) = \trace(a)$.
  \item For each state~$t$, $\pisigmas (\lc \alpha\in \frags(\A^\ast)
    \mid \last(\alpha) = t \rc) = \pi(t)$.
  \end{enumerate}
\end{definition}

\blankline

\noindent
Occasionally we want to make reference to the scheduler~$\scdl$
underlying a weak combined transition $s \transs[a]_c \pi$. We do so
by writing $s \transs[a]_\scdl \pi$.
\newtext{If the scheduler is deterministic, we may write $s \transs[a] \pi$.}

\begin{figure}
%% \centering
\hspace*{-0.5cm}
\begin{minipage}{.6\textwidth}
  \centering
\begin{tikzpicture}
 \node (s1) at (0,0) {$s_1$}; 
 \node (pi1) at (1.3,0) {$\pi_1$};
 \node (s2) at (2.6,1.5) {$s_2$};
 \node (s3) at (2.6,0) {$s_3$}; 
 \node (pi2) at (3.9,1.5) {$\pi_2$};
 \node (pi3) at (3.9,0) {$\pi_3$}; 
 \node (s4) at (5.2,1.5) {$s_4$};
 \node (s5) at (5.2,0) {$s_5$};
 \path[->] (s1)  edge node [above] {$\tau$} (pi1);
 \path[->] (pi1) [dotted, out=90,in=180] edge node [left] {$0.5$} (s2);
 \path[->] (pi1) [dotted] edge node [above] {$0.5$} (s3);
 \path[->] (s2)  edge node [above] {$b$} (pi2);
 \path[->] (s2)  edge node [below=0.0cm, left=0.1cm] {$a$} (pi3);
 \path[->] (s3)  edge node [above] {$a$} (pi3);
 \path[->] (pi2) [dotted] edge node [above] {$1$} (s4);
 \path[->] (pi3) [dotted] edge node [above] {$1$} (s5);
 \node (s0) at (-2.6,-1) {$s_0$};   
 \node (pi0) at (-1.3,-1) {$\pi_0$};   
 \node (s6) at (0,-2) {$s_6$};   
 \node (pi6) at (1.3,-2) {$\pi_6$};    
 \node (s7) at (2.6,-2) {$s_7$};    
 \node (pi7) at (3.9,-2) {$\pi_7$};    
 \node (s8) at (5.2,-1) {$s_8$};    
 \node (s9) at (5.2,-3.1) {$s_9$};    
 \path[->] (s0)  edge node [above] {$\tau$} (pi0);
 \path[->] (pi0) [dotted] edge node [above left] {$0.5$} (s1);
 \path[->] (pi0) [dotted] edge node [below left] {$0.5$} (s6);
 \path[->] (s6)  edge node [above] {$a$} (pi6);
 \path[->] (pi6) [dotted] edge node [above] {$1$} (s7);
 \path[->] (s7) edge node [above] {$\tau$} (pi7);
 \path[->] (pi7) [dotted] edge node [above left] {$0.5$} (s8);
 \path[->] (pi7) [dotted] edge node [below left] {$0.5$} (s9);
\end{tikzpicture}
  \captionof{figure}{PTS of Example~\ref{ex:combined_transition}}
  \label{fig:aPTS}
\end{minipage}%
\begin{minipage}{.4\textwidth}
  \centering
\begin{tikzpicture}
 \node (t0) at (0,0) {$t_0$}; 
 \node (tpi0) at (1.2,0) {$\pi_0$};
 \node (t1) at (2.4,0) {$t_1$};  
 \node (tpi1) at (3.6,1) {$\pi_1$}; 
 \node (tpi2) at (3.6,0) {$\pi_1'$};  
 \node (tpi3) at (3.6,-1) {$\pi_1''$}; 
% %
 \node (t4) at (4.8,1) {$t_2$}; 
 \node (t5) at (4.8,-1) {$t_3$};  
 \node (tpi4) at (6,1) {$\pi_2$}; 
 \node (tpi5) at (6,-1) {$\pi_3$};  
 \path[->] (t0)  edge node [above] {${\small \tau}$} (tpi0);
 \path[->] (tpi0) [dotted] edge node [above] {$1$} (t1); 
 \path[->] (t1)  edge node [above left=-0.05cm] {$a$} (tpi1);
 \path[->] (t1)  edge node [above=-0.075cm] {$a$} (tpi2);
 \path[->] (t1)  edge node [below left=-0.05cm] {$a$} (tpi3);
% % 
 \path[->] (tpi1) [dotted] edge node [above] {$1$} (t4);
 \path[->] (tpi2) [dotted] edge node [below right] {$0.5$} (t4);
 \path[->] (tpi2) [dotted] edge node [above right] {$0.5$} (t5); 
 \path[->] (tpi3) [dotted] edge node [above] {$1$} (t5);
 \path[->] (t4)  edge node [above] {$b$} (tpi4);
 \path[->] (t5)  edge node [above] {$c$} (tpi5);
\end{tikzpicture}
\begin{tikzpicture}
 \node (t1) at (2,0) {$u_1$};  
 \node (tpi1) at (3.2,0.8) {$\pi_1$}; 
 \node (tpi3) at (3.2,-0.8) {$\pi_1''$}; 
 \node (t4) at (4.4,0.8) {$t_2$}; 
 \node (t5) at (4.4,-0.8) {$t_3$};  
 \node (tpi4) at (5.6,0.8) {$\pi_2$}; 
 \node (tpi5) at (5.6,-0.8) {$\pi_3$};  
 \path[->] (t1)  edge node [above left=-0.05cm] {$a$} (tpi1);
 \path[->] (t1)  edge node [below left=-0.05cm] {$a$} (tpi3);
% % 
 \path[->] (tpi1) [dotted] edge node [above] {$1$} (t4);
 \path[->] (tpi3) [dotted] edge node [above] {$1$} (t5);
 \path[->] (t4)  edge node [above] {$b$} (tpi4);
 \path[->] (t5)  edge node [above] {$c$} (tpi5);
\end{tikzpicture}
  \captionof{figure}{$t_0 \pbbisim u_1$ but $t_0 \nbbisim u_1$}
  \label{fig:pbbisim_vs_bbisim}
\end{minipage}
\end{figure}

\blankline

\noindent\begin{example}\label{ex:combined_transition}
\newtext{
  The state~$s_0$ in Figure~\ref{fig:aPTS}, 
  has the following weak combined transitions:
  \begin{inparaenum}[(i)] %% [\itshape (i) \upshape)]
  \item $s_0 \transs[\varepsilon] \delta_{s_0}$ 
  \item $s_0 \transs[\varepsilon]_c \pi_{s_0}$ with 
    $\pi_{s_0}(s_0) = \pi_{s_0}(s_2) =  \pi_{s_0}(s_3) = 0.2$ and $\pi_{s_0}(s_6)=0.4$
  % \item $s_0 \transs[\varepsilon]_c \pi_{s_0}$ with 
  %   $\pi_{s_0}(s_0) = \pi_{s_0}(s_2) =  \pi_{s_0}(s_3) = 0.2$ and
  %   $\pi_{s_0}(s_6)=0.5$ 
  \item $s_0 \transs[a] \pi_a$ with $\pi_a(s_5) = \pi_a(s_7) = 0.5$
  \item $s_0 \transs[a]_c \pi'_a$ with $\pi_a'(s_5) = 0.5$, 
    $\pi_a'(s_7) = 0.2$ and $\pi_a'(s_8) = \pi_a(s_9) = 0.15$.  
 \end{inparaenum}
Notice combined transitions in $(i)$ and $(iii)$ are defined using deterministic schedulers. 
There is no weak combined transition from~$s_0$ with
action~$b$, since there is no scheduler that allows to execute~$b$ with probability~$1$.
}
\end{example}

% \blankline

\noindent
Next, we generalize the notion of a weak combined transition to allow
for a distribution of source states.

\blankline

\begin{definition}
  \label{def:hyper-trans}
  Let $\A = ( \mkern1mu \closedSTerms, \, \Act, \, \trans \mkern1mu )$
  be a PTS, $\pi, \pi' \in \SD(S)$ sub-probability distributions, and
  $a \in \Act$ an action. We say that $\pi \transs[a]_c \pi'$ is a
  weak combined hyper transition if there exists a family of weak
  combined transitions $\lc s \transs[a]_c \pi_s \rc_{s \in
    \support(\pi')}$ such that $\textstyle \pi' = \sum_{s \in
    \support(\pi)} \: \pi(s) \cdot \pi_s$.
\end{definition}

\blankline

\noindent
In the sequel we will consider weak combined transitions and weak
combined hyper transitions consisting of transitions taken from a
specific subset, the so-called \emph{allowed} transitions. This leads
to the notion of allowed transitions.

\blankline

\noindent\begin{definition}
  \label{def:allowed-trans}
  Choose, for a PTS $\A = ( \mkern1mu \closedSTerms, \, \Sigma, \,
  \trans \mkern1mu )$, a subset of
  transitions $P \subseteq {\trans}$. We say that $s \transs[a \rest P]_c
  \pi$ is an \emph{allowed} weak combined transition from~$s$ to~$\pi$
  respecting~$P$, if there exists a scheduler~$\scdl$ inducing a weak
  combined transition $s \transs[a]_c \mu$ such that, for each $\alpha
  \in \frags^\ast(\A)$, $\support(\scdl(\alpha)) \subseteq
  P$. Similarly, we say that $\pi \transs[a \rest P]_c \pi'$ is an
  \emph{allowed} weak combined hyper transition from~$\pi$
  to~$\pi'$ respecting~$P$, if there exists a family of allowed weak
  combined transitions $\lc s \transs[a \rest P]_c \pi_s \rc_{s \in
    \support(\pi)}$ such that $\textstyle \pi' = \sum_{s \in
    \support(\pi)} \: \pi(s) \cdot \pi_s$.
\end{definition}

\blankline

\noindent
We write $s \trans[a]_c \pi$ if there is a scheduler~$\scdl$ such
that $\length(\scdl) = 1$ and $s \transs[a]_\scdl \pi$. We write
$\pi \trans[a]_c \pi'$ if $\pi \transs[a]_c \pi'$ with an
associated family of transitions $\lc s \trans[a]_c \pi_s \rc_{s \in
\support(\pi)}$. 
\newtext{We employ similar notation for combined transitions with 
deterministic schedulers and allowed transitions.}

The paper~\cite{EHKTZ2013} proposes to consider in their study of
weak probabilistic bisimulation the notion of an allowed
transition (see Definition~\ref{def:allowed-trans}). The key idea is
to consider transitions taken from a specific part of the transition
system. 
\newtext{We use the essence of this to distill the notions of
\emph{probabilistic branching bisimulation} and 
\emph{branching bisimulation}}.

Given a relation ${\B} \subseteq \closedSTerms \times
\closedSTerms$, its lifting to $\Delta(\closedSTerms) \times
\Delta(\closedSTerms)$ is defined as follows:
$\pi \B \pi'$ iff there is a weight function $w:(\closedSTerms \times
\closedSTerms) \to [0,1]$ such that for all $t,t' \in \closedSTerms$,
\begin{inparaenum}[(i)]
\item\label{eq:weigh-left}%
  $w(t,\closedSTerms) = \pi(t)$,
\item\label{eq:weigh-right}%
  $w(\closedSTerms,t') = \pi'(t')$, and
\item\label{eq:weigh-rel}
  $w(t,t') > 0 \text{ implies } t \B t'$.
\end{inparaenum}
It is easy to check that the weight function is a probability
distribution on $\closedSTerms \times \closedSTerms$.
Moreover, the lifting of~$\B$ is reflexive, symmetric and/or
transitive if $\B$~is.  The overloading of~$\B$
should not raise confusion.

\begin{definition}
  \label{def:bbisim}
  Let $\A = ( \mkern1mu \closedSTerms, \, \Act, \, {\trans} \mkern1mu
  )$ be a PTS, and let $\B \subseteq \closedSTerms \times
  \closedSTerms$ be a symmetric relation. A transition $s \trans[\tau]
  \pi$ is called \emph{branching preserving} for~$\B$ if $\delta(s) \B
  \pi$.
  The relation~$\B$ on~$\closedSTerms$ is called a 
  \emph{probabilistic branching bisimulation} 
  \newtext{(resp. \emph{branching bisimulation})}
  for~$\A$ if, for a branching preserving set of transitions $P \subseteq {\trans[\tau]}$, 
  it holds that $s \B t$ and $s \trans[a] \pi_s$ imply either 
  \begin{enumerate}[(i)]
   \item $a = \tau$ and $s \trans[\tau] \pi_s \in P$, or
   \item $t \transs[\tau \rest P]_c \tilde{\pi}_t$ with $\tilde{\pi}_t \trans[a]_c \pi_t$ 
   \newtext{(resp. $\tilde{\pi}_t \trans[a] \pi_t$)} and 
   $\pi_s \B \pi_t$, for all states $s, t \in \closedSTerms$.
  \end{enumerate}
   \newtext{
   We write $s \pbbisim t$ (resp. $s \bbisim t$) 
   if there exists a probabilistic branching bisimulation 
   (resp. branching bisimulation) for~$\A$ relating states $s$ and~$t$}.
\end{definition}

\begin{example}
 \newtext{
 In Figure~\ref{fig:pbbisim_vs_bbisim} we can see the subtle difference between
 the relations ${\pbbisim}$ and ${\bbisim}$.
 Comparing the transitions of $t_1$ with the transitions of $u_1$,
 we have that the transition $t_1 \trans[a] [0.5] t_2 \oplus [0.5] t_3$
 can be mimicked by $u_1$ with the transitions 
 $u_1 \trans[\tau] \delta_{u_1}$ and $\delta_{u_1} \trans[a]_c [0.5]t_2 \oplus [0.5] t_3$.
 Therefore it is clear that $t_1 \pbbisim u_1$ and $t_0 \pbbisim u_1$. In this case, $t_0 \trans[\tau] \delta_{t_1}$ 
 is branching preserving.
 On the other hand there is no transition $\delta_{u_1} \trans[a] [0.5]t_2 \oplus [0.5] t_3$, 
 therefore $t_1 \nbbisim u_1$ and $t_0 \nbbisim u_1$
 }
\end{example}

\noindent
In~\cite{AGT2012}, a definition for branching bisimulation for the
alternating model~\cite{Han91:phd} is presented.
Contrary to our definition above, it does not use the notion of
schedulers.
We explain the underlying intuition considering the PTS depicted in
Figure~\ref{fig:no_schedulers}. 
It is clear that states $T = \lc t_1, \ldots, t_4 \rc$ are branching
bisimilar and that all transitions labeled~$\tau$ are branching
preserving.
Notice that for every scheduler~$\scdl$ such that $s\transs[a]_\scdl
\pi_\scdl$ we have~$\pi_\scdl = \pi$.
Thus, in this particular example, schedulers do not make any
difference for the definition of a combined transition executing~$a$.
In general, it does not make any change choosing, with different
probabilities, between branching preserving transitions until a
visible action is executed, because the targets of two branching
preserving transitions are, by definition, the same modulo branching
bisimulation.

\begin{figure}
\begin{center}
\begin{tikzpicture}
 \node (s0) at (0,0) {$s_0$}; 
 \node (pi0) at (2,0) {$\pi_0$};
 \node (s1) at (4,2) {$s_1$}; 
 \node (s2) at (4,0) {$s_2$};  
 \node (s3) at (4,-2) {$s_3$}; 
 \node (pi1) at (6,2) {$\pi_1$}; 
 \node (pi2) at (6,0) {$\pi_2$};  
 \node (pi3) at (6,-2) {$\pi_3$}; 
 \node (t1) at (8,2) {$t_1$}; 
 \node (t2) at (8,1) {$t_2$};  
 \node (t3) at (8,-1) {$t_3$}; 
 \node (t4) at (8,-2) {$t_4$};  
 \node (pi) at (10,0) {$\pi$}; 
  \path[->] (s0)  edge node [above] {$\tau$} (pi0);
  \path[->] (pi0) [dotted] edge node [above left] {$p_1$} (s1); 
  \path[->] (pi0) [dotted] edge node [above] {$p_2$} (s2); 
  \path[->] (pi0) [dotted] edge node [below left] {$p_3$} (s3);
  \path[->] (s1)  edge node [above] {$\tau$} (pi1);
  \path[->] (s2)  edge node [above] {$\tau$} (pi2);
  \path[->] (s3)  edge node [above] {$\tau$} (pi3);
  \path[->] (s2)  edge node [above] {$\tau$} (pi3);  
  \path[->] (pi2) [dotted, out=90,in=340] edge node [above right] {$p_4$} (s1);
  \path[->] (pi2) [dotted, out=110,in=70] edge node [above] {$p_5$} (s2);  
  \path[->] (pi1) [dotted] edge node [above] {$p_6$} (t1); 
  \path[->] (pi1) [dotted] edge node [above] {$p_7$} (t2); 
  \path[->] (pi2) [dotted] edge node [above] {$p_8$} (t2);
  \path[->] (pi2) [dotted] edge node [above] {$p_9$} (t3); 
  \path[->] (pi2) [dotted] edge node [below] {$p_{10}$} (t4); 
  \path[->] (pi3) [dotted] edge node [below] {$p_{11}$} (t4);  
  \path[->] (t1)  edge node [above] {$a$} (pi);
  \path[->] (t2)  edge node [above] {$a$} (pi);
  \path[->] (t3)  edge node [above] {$a$} (pi);
  \path[->] (t4)  edge node [above] {$a$} (pi);
\end{tikzpicture}
\vspace*{-1.5\baselineskip}
\end{center}
\caption{Schedulers are not needed to define branching bisimulation}
\label{fig:no_schedulers}
\end{figure}
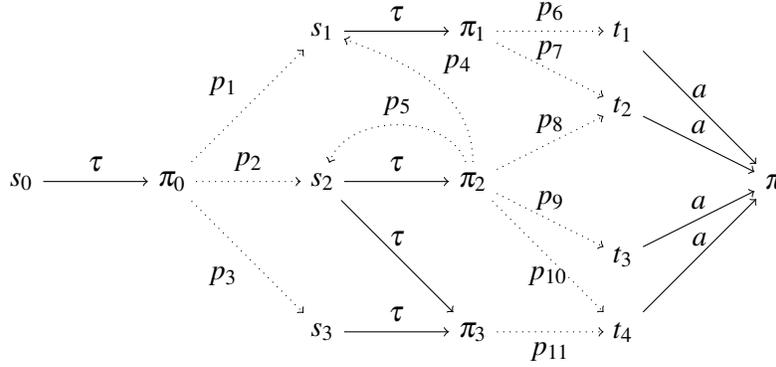

We adapt the definition of~\cite{AGT2012} for branching bisimulation
without scheduler for alternating systems to the present setting. 
This new definition coincides with Definition~\ref{def:bbisim}.
A \emph{concrete} execution of a PTS~$\A$ is a sequence $s_0 a_0 \pi_0
s_1 %% a_2 \pi_2 s_3 
\ldots \pi_{n-1} s_n a_n \pi_n$ such that $s_i
\trans[a_{i}] \pi_{i}$ for $0 \leqslant i \leqslant n$ and $s_{j+1}
\in \support(\pi_j)$ for $0 \leqslant j < n$.

\begin{definition}
  \label{def:bb-without-scheduler}
  A symmetric relation $\B \subseteq \closedSTerms \times
  \closedSTerms$ is called a \emph{branching bisimulation without scheduler}
  for a PTS $\A = ( \mkern1mu \closedSTerms, \, \Act, \, {\trans}
  \mkern1mu )$ iff $s \B s'$ and $s \trans[a] \pi$ imply (i)~$a =
  \tau$ and $\delta_{s} \B \pi$, or~(ii) $a \in \Sigma$ and there is a
  concrete execution $s_0 \tau \pi_1 s_1 \tau \pi_2 s_2 \ldots \pi_{n}
  s_n a \pi'$ such that $s' = s_0$, $s \B s_i$ for $0 \leqslant i
  \leqslant n$, $\delta_s \B \pi_j$ for $1 \leqslant j \leqslant n$
  and $\pi \B \pi'$.
  We write $s \nsbisim s'$ if a branching bisimulation without
  scheduler for~$\A$ relates $s$ and~$s'$. 
\end{definition}

\noindent
Thus, while the scheduler-based definition requires preservation of
branching by the transitions that are combined, the scheduler-less
definition requires branching bisimilarity of the intermediate states
and distribution with the source state. In view of this, the next
result does not come as a surprise.

\begin{theorem}
  \label{th:with-and-without-are-eq}
  Let $\A = ( \mkern1mu \closedSTerms, \, \Act, \,  {\trans} \mkern1mu
  )$ be a PTS. Then $s \bbisim t$ iff $s
  \nsbisim t$, for all $s,t \in \closedSTerms$.
  \qed
\end{theorem}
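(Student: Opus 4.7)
The plan is to establish the two inclusions $\nsbisim \subseteq \bbisim$ and $\bbisim \subseteq \nsbisim$ separately; the first direction is routine, while the second is the delicate one.

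For $\nsbisim \subseteq \bbisim$, I would verify that any branching bisimulation without scheduler $\B$ also satisfies the conditions of Definition~\ref{def:bbisim} with the same underlying relation. Take $s \B t$ and a transition $s \trans[a] \pi$. If $a = \tau$ and $\delta_s \B \pi$, then the transition itself belongs to the branching-preserving set~$P$, so condition~(i) of Definition~\ref{def:bbisim} holds. Otherwise, Definition~\ref{def:bb-without-scheduler} provides a concrete execution $s_0 \tau \pi_1 s_1 \ldots \pi_n s_n a \pi'$ starting from $t$ with $s \B s_i$, $\delta_s \B \pi_j$ and $\pi \B \pi'$. This path induces a deterministic scheduler witnessing the weak combined transition $t \transs[\tau \rest P]_c \delta_{s_n}$, which is then completed by the single transition $\delta_{s_n} \trans[a] \pi'$. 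To check that each intermediate $s_i \trans[\tau] \pi_{i+1}$ lies in the branching-preserving set, I would combine $s \B s_i$, which lifts to $\delta_s \B \delta_{s_i}$, with $\delta_s \B \pi_{i+1}$, and conclude $\delta_{s_i} \B \pi_{i+1}$ using transitivity of (the lifting of)~$\B$.

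For $\bbisim \subseteq \nsbisim$, given $s \bbisim t$ and $s \trans[a] \pi$, the scheduler $\scdl$ underlying the matching weak combined transition $t \transs[\tau \rest P]_c \tilde\pi_t$ distributes probability over a set of branching-preserving finite paths ending in states of $\support(\tilde\pi_t)$, while $\tilde\pi_t \trans[a] \pi_t$ uses a family $\{s' \trans[a] \pi_{s'}\}$ indexed by $\support(\tilde\pi_t)$ with $\pi \bbisim \pi_t$. I would extract a single concrete execution by selecting any path of positive $\pisigmas$-probability and appending the chosen $a$-transition at its terminal state. Every state along the chosen path is $\bbisim$-equivalent to $s$, and every intermediate distribution is $\bbisim$-equivalent to $\delta_s$, because each step is drawn from~$P$.

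The main obstacle is to guarantee that the final $a$-transition can be chosen so that $\pi \bbisim \pi_{s_n}$ for the selected terminal state $s_n$, whereas a~priori the bisimulation condition only asserts $\pi \bbisim \pi_t$ for the combined target. I expect to resolve this using the intuition illustrated by Figure~\ref{fig:no_schedulers}: all states in $\support(\tilde\pi_t)$ are mutually $\bbisim$-related, lying in the same bisimilarity class as~$s$; therefore their outgoing $a$-transitions have $\bbisim$-equivalent targets, so any individual summand $\pi_{s_n}$ of $\pi_t$ already satisfies $\pi \bbisim \pi_{s_n}$. The natural vehicle for formalizing this argument is a transfinite induction on $\length_t(\scdl)$, with the inductive step relying on the transitivity of $\bbisim$ and of its lifting, together with the fact that branching-preserving transitions cannot leave the bisimilarity class of their source.
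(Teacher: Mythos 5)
The paper itself does not spell out a proof of Theorem~\ref{th:with-and-without-are-eq} (it is stated with its proof omitted), so you cannot be matched against it; judged on its own, your proposal has a genuine gap in each direction. For $\nsbisim \subseteq \bbisim$: a concrete execution is a single resolution of the probabilistic branching, not a scheduler, so the deterministic scheduler that follows the path $s_0\tau\pi_1 s_1\ldots\pi_n s_n$ does \emph{not} witness $t \transs[\tau \rest P]_c \delta_{s_n}$. After scheduling $s_0 \trans[\tau] \pi_1$ the process sits in each $u \in \support(\pi_1)$ with probability $\pi_1(u)$, only one of which is $s_1$; the induced weak combined transition therefore has mass on all the off-path states, and Definition~\ref{def:bbisim} additionally requires \emph{every} state in the support of $\tilde{\pi}_t$ to perform the final $a$-step. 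To repair this you must also schedule the off-path states (each of which is related to $s$ because $\delta_s \B \pi_j$), recursively, and then argue that the resulting scheduler performs $a$ with probability $1$ and that the combined target distribution is related to $\pi$. That recursive construction and its probability-one termination argument is the real content of this direction, and it is missing.

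For $\bbisim \subseteq \nsbisim$: the step you single out as the main obstacle is resolved by a false claim. That all states of $\support(\tilde{\pi}_t)$ are branching bisimilar to $s$ does not imply that the particular $a$-transitions chosen for them in the hyper transition $\tilde{\pi}_t \trans[a] \pi_t$ have $\bbisim$-equivalent targets: a single state may have several $a$-transitions with non-bisimilar targets, and the definition only guarantees that the \emph{mixture} $\pi_t$ is related to $\pi_s$, not any individual summand $\pi_{s_n}$ (this is exactly the phenomenon of Figure~\ref{fig:pbbisim_vs_bbisim} that separates $\pbbisim$ from $\bbisim$). So picking an arbitrary positive-probability path and appending its endpoint's chosen $a$-transition does not yield the required $\pi \B \pi'$ of Definition~\ref{def:bb-without-scheduler}; one has to exploit that each endpoint is again related to $s$ and descend further, which brings you back to a nontrivial induction. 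Moreover, the proposed ``transfinite induction on $\length_t(\scdl)$'' is not well founded as stated, since $\length_t(\scdl)$ may be infinite (schedulers need only terminate with probability $1$), so a different measure or an approximation/limit argument is needed; and the transitivity of $\bbisim$ and of its lifting, as well as the fact that allowed weak transitions built from branching preserving steps keep all intermediate distributions related to $\delta_s$, are auxiliary facts you use but would still have to prove.
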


\blankline

\noindent
It is well-known that typically branching bisimulation per se is not
preserved by most process algebras~\cite{Fok00,CLT14:fi}. 
This is solved strengthening the requirements by means of the
\emph{rootedness condition}.  In this way, we obtain a stronger
notion, in our setting that of a rooted branching
bisimulation.

\begin{definition}
  A symmetric relation ${\rel} \subseteq \closedSTerms \times
  \closedSTerms$ is a \emph{rooted branching bisimulation} with
  respect to a PTS $\A = ( \mkern1mu \closedSTerms, \, \Act , \,
  {\trans} \mkern1mu )$ if, for all $s,t \in \closedSTerms$,$s \rel t$
  and $s \trans[a] \theta_s$ imply $t\trans[a] \theta_t$ for some
  $\theta_t$ s.t. $\theta_s \bbisim \theta_t$.
\end{definition}

\section{The probabilistic RBB format}
\label{sec:format}

A congruence over an algebraic structure is an equivalence relation on
the elements of the algebra compatible with its structure.
Formally, a (sort-respecting) equivalence relation
${\relR}\subseteq\closedTerms\times\closedTerms$ is a
\emph{congruence} if
for all $f\in\Sigma$ and $\xi_i,\xi'_i \in \closedTerms$ with $\xi_i
\relR \xi_i'$ for all~$i$, $1 \leqslant i \leqslant \rank(f)\}$,
it holds that $f(\xi_1,\ldots,\xi_{\rank(f)}) \relR
f(\xi'_1,\ldots,\xi'_{\rank(f)})$. 
% 
% In our setting we interpret ${\rel}$ on distribution terms as the
% lifting of~$\rel$ from their interpretation: for
% $\theta,\theta'\in\closedDTerms$, we define $\theta \rel \theta'$ iff
% $\sem{\theta}{} \rel \sem{\theta'}{}$.

We next explain the restrictions needed by the format.
Then we present the probabilistic RBB safe specification format.
If a PTSS~$P$ satisfies this specification format then rooted
branching bisimulation is a congruence for all operator defined
by~$P$.
Finally, we present the definitions and lemmas needed to prove the
main result.
Basically, we have extended the ideas from \cite{Fok00} to our
setting.

% In Section~\ref{sec:format:RBB}, we present the probabilistic RBB safe
% specification format.
% % 
% If a PTSS~$P$ satisfies this specification format then rooted
% branching bisimulation is a congruence for all operator defined
% by~$P$.
% % 
% Section~\ref{sec:format:restrictions} explains the restrictions over
% the format and Section~\ref{sec:congruence_proof} presents the
% definitions and lemmas needed to prove the main result.
% % 
% Basically, we have extended the ideas from \cite{Fok00} to our
% setting.

% \subsection{Restrictions over the format}
% \label{sec:format:restrictions}

\myparagraph{Restrictions over the format.}
Consider, in the context of our running example, the terms $s =
a.(\dist{b.\nullproc})$ and $t = a.(\dist{\tau.(b.\nullproc)})$.
Note~$s \rbbisim t$.
The restrictions on the format discussed below are justified by the
following examples.

\emph{Look-ahead is not allowed.}
The format presented in \cite{DL12,LGD12} allows \emph{quantitative
  premises}.
These premises have the shape $\theta(Y) \gtgeq p$ with 
$\theta \in \openDTerms,\ Y \subseteq \TVar,\ \gtgeq \in \{>, \geq\}$
and $p\in[0,1]$. 
Given a closed substitution $\rho$, the closed premise $\rho(\theta(Y)
\gtgeq p)$ holds 
iff $\sem{\rho(\theta)}{}(\rho(Y)) \gtgeq p$ holds.
Without quantitative premises it is not possible to have look-ahead.
Suppose that our rules (Definition~\ref{def:ptss}) support this kind
of premises, then we can add to $\runningPTSS$ the operator~$f$,
$\arity(f) = \TS \to \TS$, defined by
\begin{equation}\label{rules:lookahead}
  \dedrule{x \trans[a] \mu \quad \mu(\{y\}) > 0  \quad y \trans[b]
    \nu}{f(x) \trans[a] \dist{\nullproc}} 
\end{equation}
Then $f(s) \trans[a]$ and $f(t) \ntrans[a]$, therefore~$f(s) \notrbbisim f(t)$. 
The problem with look-ahead is the possibility for testing after an
action is executed.
The processes reached (with probability greater than zero) after the
execution of the action may not be rooted branching bisimilar. 
To avoid this problem and simplify the presentation, the kind of rules
that we are using does not support quantitative premises.

The following examples follow in essence the same idea: 
by combining operators with particular rules, it is possible to test,
in some way, the target of a positive premise.
In addition, we use the examples to motivate other ingredients of our
format below.

\emph{Arbitrary testing of a positive premise target using other rules
  is not allowed}.
Add to $\runningPTSS$ the operator~$f$ with $\arity(f) = \TS \to \TS$,
and operator~$g$ with $\arity(g) = \TS\TS \to \TS$
defined by 
\begin{equation} \label{rules:f_g_base}
  \dedrule{x \trans[a] \mu}{f(x) \trans[a] \dist{g}(\delta(x), \mu)}
  \qquad \qquad
  \dedrule{x_2 \trans[b] \mu }{g(x_1, x_2) \trans[b] \dist{\nullproc}}
\end{equation}
then $f(s) \notrbbisim f(t)$.
Notice, $f(s) \trans[a] \dist{g}(\delta(s), \dist{b.\nullproc})$ and
$g(s, b.\dist{\nullproc}) \trans[b] \dist{\nullproc}$, while $f(t)
\trans[a] \dist{g}(\delta(t), \dist{\tau.b.\nullproc})$ and $g(t,
\tau.b.\dist{\nullproc}) \ntrans[b]$.
Here, the target of the premise $x \trans[a] \mu$ of the first rule is
tested, via instantiation of~$x_2$, in the second rule.
When the target of a positive premise is used as an argument of the
function in the target of the conclusion, we say that the position of
the argument is \emph{wild}. If the position is not wild, then it is
\emph{tame}.
Here, the second argument of~$g$ is wild.
To `control' wild arguments, following~\cite{Fok00}, \emph{patience
  rules} are introduced, see Definition~\ref{def:patience_rule}.
In this case, a patience rule for the second argument of~$g$ is defined by
\begin{equation}\label{rules:f_g_patience}
  \dedrule{x_2 \trans[\tau] \mu}{g(x_1, x_2) \trans[\tau]
    \dist{g}(\delta(x_1), \mu)} 
  \vspace{3pt}
\end{equation}
Taking into account this new rule we have that~$f(s) \bbisim f(t)$.

However, the problem we have presented is more general. Add to
$\runningPTSS$ operator~$f$ with $\arity(f) = \TS \to \TS$,
operator~$g$ with $\arity(g) = \TS\TS \to \TS$ and operator~$h$ with
$\arity(h) = \TS \to \TS$ defined by
\begin{equation} \label{rules:f_g_h}
\dedrule{x \trans[a] \mu}{f(x) \trans[a] \dist{h}(\dist{g}(\delta(x), \mu))}
\qquad\qquad
\dedrule{x_2 \trans[b] \mu }{g(x_1, x_2) \trans[b] \dist{\nullproc}}
\qquad\qquad
\dedrule{x_1 \trans[b] \mu }{h(x_1) \trans[b] \dist{\nullproc}}
  \vspace{3pt}
\end{equation}
Then~$f(s) \notrbbisim f(t)$. 
In the target of the conclusion of the first rule, the second argument
of $\dist{g}$ is the target of a positive premise.  In addition, this
term is the argument of~$\dist{h}$.
In this case, we have that the second argument of~$g$ and the argument
of~$h$ are wild.
Also here, terms $f(s)$ and~$f(t)$ can become rooted branching
bisimilar adding patience rules for the wild arguments.
In the next section, Definition~\ref{def:wild_argument} formalizes the
notion of a wild argument and Definition~\ref{def:w-nested} is
introduced to deal with their nesting.

Finally, we point out that the condition of being wild can be
\emph{`inherited'}.  For example, take into account rules from
(\ref{rules:f_g_base}) and (\ref{rules:f_g_patience}).  Recall, the
second argument of~$g$ is wild and its patience rule is defined.
Add a new operator~$h$ with $\arity(h) = \TS\TS \to \TS$ and the
following rules
\begin{equation}\label{rules:inherited}
  \dedrule{}{g(x_1,x_2) \trans[a] \delta(h(x_2,x_1))} 
  \qquad \qquad
  \dedrule{x_1 \trans[b] \mu}{h(x_1,x_2) \trans[c] \dist{0}} 
  \medskip
\end{equation}
the first argument of~$h$ is wild because variable~$x_2$ appears in
that position and it also appears in a wild position of~$g$.  If we do
not add a patience rule for the first argument of~$h$ we have~$f(s)
\notrbbisim f(t)$.

\emph{Wild arguments cannot be used in the source of a premise
  unrestrictedly}.
Again, take into account rules from (\ref{rules:f_g_base}) and
(\ref{rules:f_g_patience}) and add either one of the following rules
\begin{equation}
  \dedrule{x_2 \ntrans[b]}{g(x_1,x_2) \trans[a] \dist{\nullproc}} 
  \qquad \qquad
  \dedrule{x_2 \trans[\tau] \mu}{g(x_1,x_2) \trans[a]
    \dist{\nullproc}} 
  \medskip
\end{equation}
In both cases we get~$f(s) \notrbbisim f(t)$.
The first rule is an example that shows that a wild argument cannot
appear as the source of a negative premise.
The second rule shows that a wild argument cannot be used as the
source of a positive premise with label~$\tau$.
Both kind of restrictions will be present in the format.

\myparagraph{The RBB format.}
Definition~\ref{def:ngraph} introduces the \emph{nesting graph}: an
edge from~$\tuple{f,i \mkern1mu }$ to~$\tuple{g,j \mkern1mu }$ in the
graph encodes that a variable appearing in the $i$-th argument of~$f$,
also appears in a term that is used as the $j$-th argument of~$g$ in
the conclusion of a rule.
This graph is used to define when a variable is~wild.

\blankline

\begin{definition}
  \label{def:ngraph}
  \begin{itemize}
  \item [(a)]
  Let $P = ( \mkern1mu \Sigma,\, \Act, \, \calR \mkern1mu )$ be a
  PTSS\@.  The \emph{nesting graph} of the
  PTTS~$P$ %-- that is used to define wild positions --
  is the directed graph $\calG_P = (V,E)$ with
  \begin{eqnarray*}
    V & = &
    \lc \tuple{f,i} \mid 
    f \in \Sigma_\TS, 1 \leqslant i \leqslant \rank(f) \rc
    \\
    E & = &
    % \{(\tuple{f,i}, \tuple{g,j}) \mid r \in R, 
    % \conc{r} = f(\zeta_1, \ldots, \zeta_{\rank(f)}) \trans[a]
    % C[g(\xi_1, \ldots, \xi_{j-1}, C'[\zeta_i], \xi_{j+1}, \ldots,
    % \xi_{\rank(g)})] \} \cup {} \\ 
    \lc (\tuple{f,i}, \tuple{g,j}) \mid 
    r \in \calR, \,
    \conc{r} = f( \, .. , \zeta_{i} , .. \mkern1mu ) \trans[a]
    C[g( \, .. , , \xi_{j-1}, C'[\zeta_i], \xi_{j+1}, , .. \mkern1mu )] 
    \rc \cup {} \\ 
    & &
    \lc (\tuple{f,i}, \tuple{g,j}) \mid 
    r \in \calR, \,
    \conc{r} = f( \, .. , \zeta_{i} , .. \mkern1mu ) \trans[a]
    C[\dist{g}( \, .. , , \xi_{j-1}, C'[\zeta_i], \xi_{j+1}, , .. \mkern1mu )] 
    \rc
\end{eqnarray*}
  \item [(b)]
    \label{def:wild_argument}
    Let $P$ be a PTSS and $\calG_P = (V,E)$ be its nesting graph. A
    node $\tuple{g,j} \in V$ is called \emph{wild} if
    \begin{enumerate}
    \item \label{def:wild:basecase_s}
      $t \trans[a] \mu \in \prem{r}$ and
      $\trgt{\conc{r}} = C[g( \, .. , \xi_{j-1}, C'[\mu],
      \xi_{j+1}, .. \mkern1mu )]$ 
      for some~$r\in \calR$, or
    \item \label{def:wild:basecase_d}
      $t \trans[a] \mu \in \prem{r}$ and
      $\trgt{\conc{r}} = C[\dist{g}( \, .. , \xi_{j-1}, C'[\mu],
      \xi_{j+1}, .. \mkern1mu )]$ 
      for some~$r\in \calR$, or
    \item 
      \label{def:wild:indcase} 
      $(\tuple{f,i}, \tuple{g,j}) \in E$
      and $\tuple{f,i}$ is wild.
   \end{enumerate}
   The $i$-th argument of an operator~$f$ or~$\dist{f}$ is \emph{wild}
   if $\tuple{f,i}$ is wild, otherwise it is \emph{tame}.
  \end{itemize}
\end{definition}

\noindent
In the definition of a wild argument, items
(b.\ref{def:wild:basecase_s}) and (b.\ref{def:wild:basecase_d}) deal
with the case where the target of a positive premise $\mu$ is used in
the target of the conclusion.
Two cases are needed to deal with $g \in \Ssignature$ and its lifting
$\dist{g} \in \Dsignature$. 
Notice that $g \in \Ssignature$ can appear in a distribution term,
particularly in the target of the conclusion, using the operator
$\delta$.
The nesting of context allows to take into account all possible cases. 
Item (b.\ref{def:wild:indcase}) deals with the `inheritance' mentioned above.
Below, the definitions of \emph{patience rules}, \emph{w-nested
  contexts} and \emph{the RBB safe specification format} are given.
Later we explain how these are used to ensure that rooted branching
bisimulation is a congruence.

\begin{definition}
  \label{def:patience_rule}
  Let $\zeta \in \TVar \cup \DVar$, define $\overline{\zeta} =
  \delta(\zeta)$ whenever $\zeta$ has sort~$\TS$ and $\overline{\zeta} =
  \zeta$, otherwise.
  The \emph{patience rule} for the $i$-th argument (of sort $\TS$) of
  a function symbol $f \in \Sigma_\TS$ is the rule
  \begin{displaymath}
    \dedrule
      {x_i \trans[\tau] \mu}
      {f( \mkern1mu .. , \zeta_{i-1}, x_i, \zeta_{i+1}, .. \mkern1mu ) 
        \trans[\tau]  
        \dist{f}( \mkern1mu .. , \overline{\zeta}_{i-1}, \mu,
        \overline{\zeta}_{i+1}, .. \mkern1mu )}
  \end{displaymath}
\end{definition}

\blankline

\noindent

\begin{definition}\label{def:w-nested}
  The \emph{set of w-nested contexts} is defined inductively by
  \begin{enumerate}
  \item The empty context $[\:]$ is w-nested. 
  \item The term $f( \mkern1mu .. , \xi_{i-1}, \, C[\:], \, \xi_{i+1},
    .. \mkern1mu )$ is w-nested if $C[\:]$ is w-nested and the $i$-th
    argument of the function symbol~$f$ is wild.
  \item The term $\delta(C[\:])$ is w-nested if $C[\:]$ is w-nested.
  \item The term $\bigoplus_{i\in I} \: [p_i] \mkern1mu \xi_i$ is
    w-nested if $\xi_j = C[\:]$ for some $j \in I$, and $C[]$ is
    w-nested.
  \item The term $\dist{f}( \mkern1mu .. , \xi_{i-1}, \, C[\:], \,
    \xi_{i+1}, .. \mkern1mu )$ is w-nested if $C[\:]$ is w-nested and
    the $i$-th~argument of the function symbol~$\dist{f}$ is wild.
  \end{enumerate}
  The variable~$\zeta$ \emph{appears in a w-nested position} in $\xi
  \in \openTerms$ if there is a w-nested context~$C[\:]$ with
  $C[\zeta] = \xi$.
\end{definition}

% \blankline

\begin{definition}
  \label{def:format}
  Let $P = ( \mkern1mu \Sigma, \, \Act , \, \calR \mkern1mu )$ be a
  PTSS where each argument of~$f \in \Sigma$ is defined as wild or
  tame with respect to~$\calG_P$.
  The PTSS~$P$ is in the \emph{probabilistic RBB safe} specification
  format if for all $r \in \calR$ one of the following conditions
  holds.
  \begin{enumerate}
  \item $r$ is a patience rule for a wild argument of a function
    symbol in $\Sigma$. 
  \item $r$ is a \emph{RBB safe rule}, i.e.\ $r$ has the following
    shape
    \begin{displaymath}
      \dedrule{%
        \textstyle
       \lc t_m \trans[a_m]\mu_m \mid m \in M \rc \qquad
       \lc t_n \ntrans[b_n] \mid n\in N \} 
    }{
        f( \mkern1mu \zeta_1, \ldots, \zeta_{\rank(f)} \mkern1mu )
        \trans[a] \theta } 
    \end{displaymath}
    where $M$ and~$N$ are index sets, $\zeta_k$, and $\mu_m$, with $1
    \leqslant k \leqslant \rank(f)$ and~$m \in M$, are all different
    variables, $f \in \Sigma_{\TS}$, $t_m, t_n \in \openSTerms$ and
    $\theta \in \openDTerms$, and the following conditions are met.
    \begin{enumerate}    
    \item \label{def:format:allowed_test} 
      If the $i$-th argument of~$f$ is wild and has a patience
      rule in~$\calR$, then for all $\psi \in \prem{r}$ such that
      $\zeta_i = x_i \in \Var(\psi)$, $\psi = x_i \trans[a_i] \mu_i$
      with $a_i \neq \tau$.
    \item \label{def:format:not_allowed_test} 
      If the $i$-th argument of~$f$ is wild and does not have a
      patience rule in $\calR$, then $\zeta_i$ does not occur in the
      source of a premise of~$r$.
    \item \label{def:format:w-nested_position} 
      Variables $\mu_m$, for $m \in M$, and variables $\zeta_i$,
      where $i$-th argument of~$f$ is wild, 
      only occur at w-nested positions in~$\theta$.
    \item \label{def:format:not_lookahead}  
      $\mu_m \not \in \Var(t_{m'})$ for all $m, m' \in M$.      
    \end{enumerate}
  \end{enumerate}
\end{definition}

\blankline

\noindent
Patience rules for wild arguments allow to progress along the $i$-th parameter
using the transition $x_i \trans[\tau] \mu$.
Because of the patience rules and the restriction to test a wild argument
(Definition~\ref{def:format}.\ref{def:format:allowed_test}),
$\tau$-transitions will not be detected
by the premises of the rules of a PTSS satisfying the format.
If an argument of some $f$ is wild and it has no patience rule, then it cannot be tested 
(Definition~\ref{def:format}.\ref{def:format:not_allowed_test}).
Notice that the patience rules lift the $\tau$-transition that can be executed
by the $i$-th parameter of $f$ to $f$, i.e.
if the $i$-th parameter of $f$ can execute a $\tau$-transition then
$f$ can execute a $\tau$-transition affecting only the this parameter.
Then, the lifting of a $\tau$-transition has to be also `controlled'.
This is done requiring Definition~\ref{def:format}.\ref{def:format:w-nested_position}. %% EV: saving space
Finally, restriction Definition~\ref{def:format}.\ref{def:format:not_lookahead} is added
to avoid dealing with non-well-founded set of premises (see \cite{Gro93}). 
Because rules do not support quantitative premises, see
Definition~\ref{def:ptss}, this is not a significant restriction.

For the probabilistic RBB safe specification format we have the
following congruence result. 

\begin{theorem}
  \label{th:congruence}
  Let $P$ be a complete PTSS in RBB safe specification format.  Then
  $\rbbisim$ is a congruence relation for all operators defined in
  $P$.
\end{theorem}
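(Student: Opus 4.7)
The plan is to follow Fokkink's approach from~\cite{Fok00} and verify a candidate relation is a rooted branching bisimulation. Let $R$ be the symmetric closure of $\{(f(s_1,\ldots,s_n), f(t_1,\ldots,t_n)) : f \in \Sigma_\TS,\, s_i \rbbisim t_i \text{ for } 1 \leqslant i \leqslant \rank(f)\} \cup {\rbbisim}$. Showing that $R$ is a rooted branching bisimulation for the PTS associated to $P$ suffices, because it forces pairs obtained by applying any operator to $\rbbisim$-related arguments to be $\rbbisim$-related themselves. The case where the pair is already in $\rbbisim$ is immediate; the real content lies in the case $u = f(s_1,\ldots,s_n)$, $v = f(t_1,\ldots,t_n)$ with $s_i \rbbisim t_i$.

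Given $u \trans[a] \theta_u$, completeness of $P$ guarantees that the transition is derived from either a patience rule or an RBB safe rule. In the patience-rule case, $a = \tau$ and the transition originates from $s_i \trans[\tau] \mu$ for some wild index $i$; rootedness of $s_i \rbbisim t_i$ yields $t_i \trans[\tau] \mu'$ with $\mu \bbisim \mu'$, and applying the same patience rule to $v$ gives a matching transition. In the RBB-safe-rule case, let $r$ have conclusion $f(\zeta_1,\ldots,\zeta_n) \trans[a] \theta$ and let $\rho$ realize $u \trans[a] \theta_u$ with $\rho(\zeta_i) = s_i$. We construct $\rho'$ with $\rho'(\zeta_i) = t_i$, matching $\rho$ elsewhere on variables not tied to the premises. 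For each positive premise $t_m \trans[a_m] \mu_m$, conditions~(a) and~(b) of Definition~\ref{def:format} ensure that either $t_m$ uses only tame arguments of $f$ (where $\zeta_i$ may appear arbitrarily in sources) or $t_m$ is a single wild variable $\zeta_i$ with $a_m \neq \tau$; in both situations the rootedness of the $s_i \rbbisim t_i$ supplies an $a_m$-transition from the corresponding $t_m$-instance under $\rho'$, with target $\bbisim$-related to $\rho(\mu_m)$. Negative premises transfer by a contrapositive argument using the same equivalences. This produces $v \trans[a] \rho'(\theta)$.

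The principal obstacle is establishing $\rho(\theta) \bbisim \rho'(\theta)$, where $\rho$ and $\rho'$ may differ on variables corresponding to wild arguments of $f$ and on the $\mu_m$, and on those variables are related only by $\bbisim$ rather than $\rbbisim$. The format guarantees that these variables occur only in w-nested positions of $\theta$, and the proof proceeds by structural induction on w-nested contexts using the scheduler-free characterization from Theorem~\ref{th:with-and-without-are-eq}. At each w-nested layer introduced by a wild argument of some $g$ or $\dist{g}$, the mandatory patience rule for that position lifts internal $\tau$-activity to the whole term, producing the concrete execution $s_0\,\tau\,\pi_1\,s_1\cdots\pi_n\,s_n\,a\,\pi'$ that the scheduler-less definition demands. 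This is precisely why patience rules are required for wild positions where testing occurs, and why w-nesting is the exact syntactic discipline needed to recover branching bisimilarity of the conclusion targets. Outside w-nested positions, the stricter $\rbbisim$ on the tame arguments already delivers exact matching. The remaining inductive cases for $\bigoplus_{i\in I}[p_i]\theta_i$ and $\delta(t)$ are routine: both operators distribute over the weight-function definition of the lifted relation compatibly with convex combinations.
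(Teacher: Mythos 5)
Your high-level strategy is in the spirit of the paper's (and Fokkink's) proof, but the central technical step is missing and, as sketched, circular. Your candidate relation closes $\rbbisim$ under a \emph{single} application of one operator, yet the matching transition you produce for $v$ has target $\rho'(\theta)$, where $\theta$ is an arbitrary open distribution term mixing lifted operators, $\delta$, $\bigoplus$, tame arguments (related only by $\rbbisim$) and wild variables and premise-targets (related only by $\bbisim$). To conclude $\rho(\theta)\bbisim\rho'(\theta)$ you appeal to ``structural induction on w-nested contexts'' together with patience rules, but that induction needs exactly the statement you are trying to prove: that placing $\bbisim$-related terms into wild positions (and $\rbbisim$-related terms into tame ones) of a compound term yields $\bbisim$-related terms, i.e.\ a congruence-type property of $\bbisim$ which is not available a priori. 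The paper resolves this by introducing \emph{two} closure relations — $\relR$, the full congruence closure of $\rbbisim$, and $\relB$, the mixed closure using $\relR$ on tame and $\relB$ on wild arguments — so that $\rho(\theta)\relB\rho'(\theta)$ holds \emph{by construction} (Lemma~\ref{lemma:cong_closure_term_contexts}, using format condition~(\ref{def:format:w-nested_position})), and then proves in Lemma~\ref{lemma:relR}, by a simultaneous induction, that $\relB$ is a branching bisimulation (via the scheduler-free characterization) and $\relR$ a rooted branching bisimulation; Theorem~\ref{th:congruence} then follows in a few lines. Your single-layer relation cannot close this loop.

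A second gap is your treatment of negative premises: ``a contrapositive argument using the same equivalences'' does not suffice. Because the associated transition relation is the least 3-valued stable model, transferring a negative premise $\rho(t_n)\ntrans[b_n]$ to $\rho'(t_n)\ntrans[b_n]$ requires relating \emph{certain} and \emph{possible} transitions at different stages of the ordinal construction $\tuple{\CerTr_\alpha,\PosTr_\alpha}$; this is precisely the role of clause~(I$_\alpha$) of Lemma~\ref{lemma:relR} (if $s\relR t$ and $s\trans[a]\theta_s\in\CerTr_\lambda$ then $t\trans[a]\theta_t\in\PosTr_\alpha$), proved by transfinite induction interleaved with clauses~(II$_\alpha$) and~(III$_\alpha$). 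Without this stratified argument the claimed transfer of transitions under $\rho'$ — both for positive premises on wild arguments and for negative premises — is unjustified. Also note a smaller slip: in the patience-rule case you invoke rootedness to get $t_i\trans[\tau]\mu'$ with $\mu\bbisim\mu'$, but $s_i\rbbisim t_i$ is only needed/used there through $\relR$, and the resulting pair of targets must again be handled by the closure relation $\relB$, not by $\bbisim$ alone, for the same reason as above.
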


\noindent
In the rest of this section we present the key elements needed to
prove Theorem~\ref{th:congruence}.
For the sake of clarity, we fix $P = ( \mkern1mu \Sigma, \, \Act, \,
\calR \mkern1mu )$, a complete PTSS\@.
For any ordinal~$\alpha$, let $\tuple{\CerTr_\alpha,\PosTr_\alpha}$ be
the 3-valued model of~$P$ constructed inductively as in
Lemma~\ref{lem:inductive-construction-of-3v-least-model} with
$\tuple{\CerTr_\lambda,\PosTr_\lambda}$ being the 3-valued least
stable model. Thus, $\CerTr_\lambda = \PosTr_\lambda$ is the
transition relation associated to~$P$.

First we introduce the relations $\relR$ and $\relB$. 
$\relR$~is the congruence closure of rooted branching bisimulation, and
$\relB$ is a kind of congruence closure of the branching bisimulation
based on wild and tame arguments and relation~$\relR$.
Notice the correlation between $\relB$ and the restrictions imposed to the 
target of the conclusion of a rule in RBB safe format.

\begin{definition}
  \label{def:relR}
  \begin{itemize}
  \item [(a)]
  The \emph{relation ${\relR} \subseteq {\closedTerms \times
      \closedTerms}$} is the smallest relation such that
  \begin{enumerate}%[(i)]
  \item \label{def:relR:state_base_case}
    ${\rbbisim} \subseteq {\relR}$,
  \item \label{def:relR:dist_base_case} $\dist{c} \relR \dist{c'}$
    whenever $c,c' \in \Ssignature$, $\arity (c), \, \arity (c') =
    \TS$ and $c \relR c'$, and
  \item\label{def:relR:ind_case}
    $f(\xi_1,\cdots,\xi_{\rank(f)})
    \relR f(\xi'_1,\cdots,\xi'_{\rank(f)})$ whenever $\xi_i \relR
    \xi'_i$ for all~$i$, $1 \leqslant i \leqslant \rank(f)$ and $f \in
    \Ssignature \cup \Dsignature$.
  \end{enumerate}
  \item [(b)]
    \label{def:relB}
    The relation ${\relB} \subseteq {\closedTerms \times \closedTerms}$
    is the smallest relation such that
    \begin{enumerate}%[(i)]
    \item \label{def:relB:state_base_case}
      ${\bbisim} \subseteq {\relB}$,
    \item \label{def:relB:dist_base_case}
      $\dist{c} \relB \dist{c'}$ whenever $c,c' \in \Ssignature$,
      $\arity (c) ,\, \arity (c') = \TS$ and $c \relB c'$, and
   \item\label{def:relB:ind_case}
    $f(\xi_1,\cdots,\xi_{\rank(f)}) \relR f(\xi'_1,\cdots,\xi'_{\rank(f)})$
    whenever $f \in \Ssignature \cup \Dsignature$, 
    \begin{itemize}
     \item $\xi_i \relR \xi'_i$ for the $i$-th tame arguments of~$f$, and
     \item $\xi_i \relB \xi'_i$ for the $i$-th wild arguments of~$f$.
    \end{itemize}
  \end{enumerate}
  \end{itemize}
\end{definition}

\blankline

\noindent
The two cases of the following lemma can be proved using structural
induction with respect to $\vartheta \in \openTerms$ and the
definitions of $\relR$ and~$\relB$.

\begin{lemma}
  \label{lemma:cong_closure_term_contexts}
  Let $\rho, \rho'$ be two closed substitutions and $\vartheta \in \openTerms$.
  \begin{itemize}
  \item If $\rho(\zeta) \relR \rho'(\zeta)$ for all $\zeta \in \TVar
    \cup \DVar$, then $\rho(\vartheta) \relR \rho'(\vartheta)$.
  \item If, for each variable $\zeta \in \Var(\vartheta)$, either
    \begin{itemize}
    \item $\rho(\zeta) \relR \rho'(\zeta)$, or
    \item $\rho(\zeta) \relB \rho'(\zeta)$ and $x$~only occurs at
      w-nested positions in $\vartheta$, 
    \end{itemize}
    then $\sigma(\vartheta) \relB \sigma'(\vartheta)$.
  \end{itemize}
\end{lemma}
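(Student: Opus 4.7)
The plan is to prove both items by structural induction on $\vartheta \in \openTerms$, together with an auxiliary inclusion $\relR \subseteq \relB$.

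The first item is the easier one: when $\vartheta = \zeta$ is a variable the hypothesis is the conclusion, and when $\vartheta = f(\xi_1, \ldots, \xi_{\rank(f)})$ the induction hypothesis gives $\rho(\xi_i) \relR \rho'(\xi_i)$ for every $i$, so Definition~\ref{def:relR}(a)(\ref{def:relR:ind_case}) yields $\rho(\vartheta) \relR \rho'(\vartheta)$ directly. The inclusion $\relR \subseteq \relB$ follows by a short induction on the construction of $\relR$: the base case reduces to $\rbbisim \subseteq \bbisim$; clause (a)(\ref{def:relR:dist_base_case}) mirrors (b)(\ref{def:relB:dist_base_case}); and the congruence clause (a)(\ref{def:relR:ind_case}) is absorbed by (\ref{def:relB})(\ref{def:relB:ind_case}), since the induction hypothesis then covers both tame and wild positions uniformly.

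With this inclusion in hand, the base case $\vartheta = \zeta$ of the second item is immediate from either disjunct of the hypothesis. In the inductive step $\vartheta = f(\xi_1, \ldots, \xi_{\rank(f)})$ I would split on whether position $i$ is tame or wild. For each tame~$i$, any $\zeta \in \Var(\xi_i)$ has an occurrence in $\vartheta$ reached through the wrapper $f(\ldots, [\,], \ldots)$, which by Definition~\ref{def:w-nested}(2) is \emph{not} w-nested since position $i$ is tame; the second disjunct of the hypothesis must therefore fail, forcing $\rho(\zeta) \relR \rho'(\zeta)$, and the first item of the lemma then yields $\rho(\xi_i) \relR \rho'(\xi_i)$. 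For each wild~$i$, I would check that the hypothesis restricts from $\vartheta$ to $\xi_i$: peeling off the outer wrapper $f(\ldots, [\,], \ldots)$ preserves w-nestedness exactly because position $i$ is wild, so every occurrence of $\zeta$ that is w-nested in $\vartheta$ remains w-nested in $\xi_i$; the inductive hypothesis then delivers $\rho(\xi_i) \relB \rho'(\xi_i)$, and Definition~\ref{def:relB}(\ref{def:relB:ind_case}) combines the pieces into $\rho(\vartheta) \relB \rho'(\vartheta)$. The operators $\delta$ and $\bigoplus_{i \in I}[p_i]\textvisiblespace$ are handled analogously, treating their arguments as wild, in line with the corresponding clauses of Definition~\ref{def:w-nested}.

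The main obstacle is the bookkeeping of w-nested occurrences when a single variable appears at several positions of $\vartheta$: a mixture of tame and wild occurrences must be reconciled, ensuring that the tame occurrence correctly forces the stronger $\relR$-relation on the value of that variable while still supplying the (weaker) $\relB$-relation demanded at the wild occurrences. The auxiliary inclusion $\relR \subseteq \relB$ is precisely what allows this reconciliation to go through, and keeping track of which w-nested contexts of $\vartheta$ actually descend into a given subterm $\xi_i$ is the delicate part of the argument.
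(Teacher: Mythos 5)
Your proof is correct and follows essentially the same route as the paper, which establishes both items by structural induction on $\vartheta$ using the definitions of $\relR$ and $\relB$ directly. The auxiliary inclusion ${\relR} \subseteq {\relB}$ you isolate (resting on the standard fact ${\rbbisim} \subseteq {\bbisim}$) and your peeling argument for w-nested contexts at wild versus tame positions are exactly the bookkeeping the paper's one-line proof sketch leaves implicit.
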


\blankline

\noindent
The relations ${\relR}$ and ${\relB}$ lift properly to distributions.
This ensures that $\sem{\theta}{} \relR \sem{\theta'}{}$ and
$\sem{\theta}{} \relB \sem{\theta'}{}$ whenever $\theta \relR \theta'$
and $\theta \relB \theta'$, respectively.
Therefore we can work at the symbolic level.

\blankline

\begin{lemma}
  \label{lemma:congruence_closure_on_states_coincides_with_closure_on_distributions}
  Let $\theta, \theta'\in\closedDTerms$. If $\theta \relR \theta'$
  then $\sem{\theta}{} \relR \sem{\theta'}{}$. If $\theta \relB
  \theta'$ then $\sem{\theta}{} \relB \sem{\theta'}{}$.
\end{lemma}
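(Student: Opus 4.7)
The plan is to prove both statements by structural induction on the distribution term $\theta$, following the recursive clauses of Definition~\ref{def:relR}. In each case the goal is to exhibit a weight function $w \colon \closedSTerms \times \closedSTerms \to [0,1]$ witnessing the corresponding lifted relation on $\Delta(\closedSTerms)$.

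For the $\relR$ statement, the base case is clause (a.2) of the definition: $\theta = \dist{c}$ and $\theta' = \dist{c'}$ for state constants with $c \relR c'$. Then $\sem{\theta}{} = \delta_c$ and $\sem{\theta'}{} = \delta_{c'}$, and $w(c,c')=1$ (zero elsewhere) is the desired witness. The inductive cases come from clause (a.3) applied to the three kinds of operator in $\Dsignature$. If $\theta = \delta(t)$ and $\theta' = \delta(t')$ with $t \relR t'$, take $w(t,t')=1$. If $\theta = \bigoplus_i [p_i]\theta_i$ (and analogously $\theta'$) with $\theta_i \relR \theta'_i$, the induction hypothesis yields weight functions $w_i$, and the convex combination $w = \sum_i p_i w_i$ is the required witness. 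If $\theta = \dist{g}(\theta_1,\ldots,\theta_n)$ (and analogously $\theta'$) with $\theta_i \relR \theta'_i$, the induction hypothesis supplies $w_i$ at each $\TS$-sorted position of $g$; I then set
\[
w(g(\xi_1,\ldots,\xi_n),\, g(\xi'_1,\ldots,\xi'_n))
\;=\;
\prod_{i:\sigma_i=\TS} w_i(\xi_i,\xi'_i)
\]
whenever $\xi_j = \theta_j$ and $\xi'_j = \theta'_j$ hold for every $\DTS$-sorted position $j$ of $g$, and $w=0$ otherwise. The marginals match $\sem{\theta}{}$ and $\sem{\theta'}{}$ by direct computation from the semantics of $\dist{g}$, and whenever $w > 0$, clause (a.3) of Definition~\ref{def:relR} combined with the weight-function property of each $w_i$ and the hypothesis $\theta_j \relR \theta'_j$ at the $\DTS$-positions produces $g(\vec\xi) \relR g(\vec{\xi'})$.

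The argument for $\relB$ follows exactly the same pattern via clauses (b.1)--(b.3) of Definition~\ref{def:relR}. The only adjustment arises in the $\dist{g}$ case, where at tame positions of $g$ the $\relR$-induction hypothesis is invoked and at wild positions the $\relB$-induction hypothesis, when constructing the $w_i$'s; clause (b.3) then yields the required relation on the support of $w$.

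The main obstacle is the $\dist{g}$ case: one must keep track of the mixed $\TS/\DTS$ arity of the underlying $g \in \Ssignature$, exploit that the $\DTS$-components of terms in $\support(\sem{\dist{g}(\vec\theta)}{})$ are forced to coincide with the corresponding $\theta_j$ by the semantics, and verify the marginal and support conditions for the product weight function. Once this bookkeeping is in place, the remaining cases are routine.
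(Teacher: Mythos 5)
Your proof is correct and is exactly the expected argument: the paper states this lemma without an explicit proof, and the intended justification is precisely this induction over the generating clauses of Definition~\ref{def:relR}, exhibiting Dirac, convex-combination, and product weight functions, with the product case handling the mixed $\TS/\DTS$ arity of the underlying $g$ and the syntactic matching at $\DTS$-positions as you do. The only point you gloss is the $\delta$ and $\bigoplus$ cases in the $\relB$ part (the paper leaves the wild/tame status of their arguments implicit; reading them as transparent, as in the w-nested-context definition, your construction goes through unchanged), which is a presentational detail rather than a gap.
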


\blankline

\noindent
Theorem~\ref{th:congruence} is a straight consequence of
Lemma~\ref{lemma:relR}.  Notice that clause (III$_\alpha$) uses the
characterization of branching bisimulation without schedulers.

\blankline

\begin{lemma}
  \label{lemma:relR}
  If $P$ is in probabilistic RBB safe format and $s, t \in
  \closedSTerms$, then
  \begin{enumerate}[($I_{\alpha}$)]
  \item If $s \relR t$ and $s \trans[a] \theta_s \in \CerTr_\lambda$,
    then $t \trans[a] \theta_t \in \PosTr_\alpha$ for some $\theta_t
    \in \closedDTerms$.
  \item If $s \relR t$ and $s \trans[a] \theta_s \in \CerTr_\alpha$,
    then $t \trans[a] \theta_t \in \CerTr_\lambda$ for some $\theta_t
    \in \closedDTerms$ with $\theta_s \relB \theta_t$.
  \item     
   If $s \relB t$ and $s \trans[a] \theta_s \in \CerTr_\alpha$ then either:    
   \begin{itemize}
   \item $a = \tau$ and $\delta(s) \relB \theta_s$ or
   \item there is a concrete execution $t_0 \tau \theta_1 t_1 \tau
     \theta_2 t_2 \ldots \theta_{n} t_n a \theta_t$ in
     $\CerTr_\lambda$ such that $t = t_0$, $s \relB t_i$ for $0 \leq i
     \leq n$, $\delta(s) \B \theta_j$ for $1 \leq j \leq n$ and
     $\theta_s\relB \theta_t$.
  \end{itemize}  
  \end{enumerate} 
\end{lemma}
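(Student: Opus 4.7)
The plan is to prove clauses~($I_\alpha$), ($II_\alpha$), and ($III_\alpha$) simultaneously by transfinite induction on~$\alpha$. The inductive construction in Lemma~\ref{lem:inductive-construction-of-3v-least-model} exposes the role of the index: a transition belongs to~$\CerTr_\alpha$ exactly when it admits a proof whose negative hypotheses are refuted by~$\PosTr_{\alpha-1}$, so the inductive hypothesis at smaller ordinals governs both the positive and the negative premises of the rule used to derive $s \trans[a] \theta_s$. Limit ordinals are handled by the monotonicity/cocontinuity of the $\CerTr_\bullet$ and $\PosTr_\bullet$ sequences, and the base case $\alpha = 0$ is vacuous for ($II_\alpha$) and ($III_\alpha$) since $\CerTr_0 = \emptyset$.

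For the successor step, I would fix a rule~$r$ and substitution~$\rho$ at the root of the proof tree of $s \trans[a] \theta_s$. If $s$ is related to $t$ directly by $\rbbisim$ (or~$\bbisim$ for clause~($III_\alpha$)), the corresponding clauses follow from the definitions of the respective bisimulations together with Lemma~\ref{lemma:congruence_closure_on_states_coincides_with_closure_on_distributions} to pass between symbolic and concrete distributions, and Theorem~\ref{th:with-and-without-are-eq} to deliver the concrete execution required for~($III_\alpha$). The real work is in the congruence case $s = f(\xi_1, \ldots, \xi_{\rank(f)})$, $t = f(\xi'_1, \ldots, \xi'_{\rank(f)})$, where $r$ has conclusion $f(\zeta_1, \ldots, \zeta_{\rank(f)}) \trans[a] \theta$. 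I would build a matching substitution~$\rho'$ with $\rho'(\zeta_i) = \xi'_i$, mapping each premise variable~$\mu_m$ to a distribution obtained by applying the inductive hypothesis~($II_{\alpha-1}$) to the positive premise $\rho(t_m) \trans[a_m] \rho(\mu_m)$. Negative premises $\rho(t_n) \ntrans[b_n]$ are transported to~$\rho'$ contrapositively via~($I_{\alpha-1}$), using that they hold in~$\PosTr_{\alpha-1}$. Format condition~(\ref{def:format:w-nested_position}), combined with Lemma~\ref{lemma:cong_closure_term_contexts}, then yields $\rho(\theta) \relB \rho'(\theta)$, supplying the distributional relation required by~($II_\alpha$) and~($I_\alpha$).

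For~($III_\alpha$) in the congruence case, the interplay between wild arguments and patience rules becomes essential. If $r$ is a patience rule for a wild argument~$i$ and $\xi_i \relB \xi'_i$, I would invoke~($III_{\alpha-1}$) on~$\xi_i$: either we obtain a branching-preserving $\tau$-step (giving $\delta(s) \relB \theta_s$ after lifting through~$\dist{f}$ using Lemma~\ref{lemma:cong_closure_term_contexts}), or we obtain a concrete execution of~$\xi'_i$ which we then lift through the same patience rule to a concrete execution of~$t$. When $r$ is RBB safe, format conditions~(\ref{def:format:allowed_test}) and~(\ref{def:format:not_allowed_test}) guarantee that the intermediate $\tau$-steps lifted from wild arguments are not accidentally detected as non-$\tau$ behaviour by the premises of~$r$, and~(\ref{def:format:w-nested_position}) again supplies the $\relB$-relation on the final targets.

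The main obstacle I anticipate is the bookkeeping in~($III_\alpha$) when several wild arguments simultaneously contribute branching-preserving $\tau$-sequences: the concrete execution of~$t$ must interleave these sequences through repeated applications of the appropriate patience rules while preserving, at every intermediate state~$t_i$, both the invariant $s \relB t_i$ and the branching-preserving condition $\delta(s) \relB \theta_j$ on the intermediate distributions, and ensuring that the negative premises of~$r$ continue to hold at each stage so that the entire interleaved execution lies in~$\CerTr_\lambda$. Verifying this compositionality, and checking that format restriction~(\ref{def:format:not_lookahead}) genuinely suffices to keep the premise instantiation well-founded, will constitute the technically most demanding part of the argument.
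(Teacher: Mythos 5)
Your overall plan --- a simultaneous transfinite induction on $\alpha$ over the three clauses, with clause $(I_{\alpha-1})$ used contrapositively to transport negative premises, Lemma~\ref{lemma:cong_closure_term_contexts} together with format condition~(\ref{def:format:w-nested_position}) to get $\rho(\theta) \relB \rho'(\theta)$, and the scheduler-free characterization (Definition~\ref{def:bb-without-scheduler}, Theorem~\ref{th:with-and-without-are-eq}) to build the concrete executions required by $(III_\alpha)$ --- is exactly the Fokkink-style argument on which the paper's proof is based, and the case split (base case via $\rbbisim$/$\bbisim$, congruence case via an RBB safe rule or a patience rule) is the right one.

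There is, however, a genuine gap in the induction bookkeeping. If $s \trans[a] \theta_s \in \CerTr_\alpha$ is derived by a rule $r$ under $\rho$, the instantiated positive premises $\rho(t_m) \trans[a_m] \rho(\mu_m)$ are only guaranteed to lie in $\CerTr_\alpha$: their subproofs have negative leaves among the same set $N$ with $\PosTr_{\alpha-1} \models N$, and nothing places them in $\CerTr_{\alpha-1}$. So your step ``apply the inductive hypothesis $(II_{\alpha-1})$ to the positive premise'' (and likewise $(III_{\alpha-1})$ in the patience-rule case) is not available, and the successor step as described does not go through. The repair is standard but must be built into the statement of the induction: argue by transfinite induction on $\alpha$ and, for fixed $\alpha$, by an inner induction on the well-founded proof tree witnessing membership in $\CerTr_\alpha$ (respectively in $\PosTr_\alpha$ for clause $(I_\alpha)$, whose positive premises are matched by $(I_\alpha)$ itself applied to strictly smaller subproofs); only the negative premises are handled one ordinal down, as you say. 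Relatedly, your treatment of $(I_\alpha)$ is too coarse: it requires no distribution relation at all, and its negative premises are discharged contrapositively via $(II_{\alpha-1})$ together with completeness ($\CerTr_\lambda = \PosTr_\lambda$), a direction your sketch does not separate from the $(II_\alpha)$ case. Finally, the parts you explicitly defer --- interleaving the lifted $\tau$-prefixes of several tested wild arguments through patience rules while maintaining $s \relB t_i$ and $\delta(s) \relB \theta_j$ at every stage, using Definition~\ref{def:relR}(b) and Lemmas~\ref{lemma:cong_closure_term_contexts} and~\ref{lemma:congruence_closure_on_states_coincides_with_closure_on_distributions} --- are precisely where the technical work of the paper's argument lies, so as it stands the proposal is a plausible plan rather than a complete proof.
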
 

\begin{proof}[Proof of Theorem~\ref{th:congruence}]
  Clause~(III$_\alpha$) of Lemma~\ref{lemma:relR} yields that $\relB$
  is a branching bisimulation. Together with clause~(II$_\alpha$) this
  implies that $\relR$ is a rooted branching bisimulation.
 By definition of ${\rbbisim}$, ${\relR} \subseteq {\rbbisim}$ and 
 by definition of ${\relR}$, ${\relR} \supseteq {\rbbisim}$,
 therefore ${\relR} = {\rbbisim}$.
 Finally, by Definition~\ref{def:relR}.a.\ref{def:relR:ind_case},
 rooted branching bisimulation is a congruence.
\end{proof}

\begin{example}
  \newtext{ The following example shows that the format does not
    preserve probabilistic branching bisimulation.  Let $t_1 =
    a.\dist{b} + a.\dist{c}$ and $t_2 = t_1 +
    a.([0.5] \mkern1mu \dist{b} \oplus [0.5] \mkern1mu
      \dist{c})$ then $t_1 \pbbisim t_2$
    (we omitt the distribution $\dist{\nullproc}$).
    Consider the following
    rules:
    \begin{displaymath}
      \dedrule{x \trans[a] \mu}{f(x) \trans[a] \dist{g}(\mu,\mu)}
      \qquad 
      \dedrule{x_1 \trans[b] \mu_1 \qquad x_2 \trans[c]
        \mu_2}{g(x_1,x_2) \trans[a] \dist{\nullproc}} 
      \qquad 
      \dedrule{x \trans[\tau] \mu}{g(x,y) \trans[\tau]
        \dist{g}(\mu,\delta(y))} 
      \qquad 
      \dedrule{y \trans[\tau] \mu}{g(x,y) \trans[\tau]
        \dist{g}(\delta(x),\mu)}
      \smallskip
    \end{displaymath}
    then $f(t_1) = a.(\dist{g(b, b)}) + a.(\dist{g(c,c)})$, 
   $f(t_2) =  f(t_1) + a.(\dist{g}([0.5] \mkern1mu \dist{b} \oplus [0.5] \mkern1mu \dist{c},
      [0.5] \mkern1mu \dist{b} \oplus [0.5] \mkern1mu \dist{c})$.
    Notice it is not possible to combine distributions $\dist{g(b,
      b)}$ and $\dist{g(c,c)}$ to get the distribution $\dist{g}([0.5]
      \mkern1mu \dist{b} \oplus [0.5] \mkern1mu \dist{c}, [0.5] \mkern1mu \dist{b} \oplus
      [0.5] \mkern1mu \dist{c})$.
    In addition $\dist{g(b, b)} \pbbisim \dist{g(c,c)} \pbbisim
    {\dist{\nullproc}} \ \npbbisim \ \dist{g}([0.5]
      \mkern1mu \dist{b} \oplus [0.5] \mkern1mu \dist{c}, [0.5] \mkern1mu \dist{b} \oplus
      [0.5] \mkern1mu \dist{c})$.  }
\end{example}

\section{Concluding remarks}
\label{sec:the_end}

In this paper we have presented the RBB safe specification format, to
the best of our knowledge, the first transition system specification
format in the quantitative setting that respects a weak equivalence.

Two main ideas underlie our approach.
First, the qualitative RBB safe specification format~\cite{Fok00}
occurs to translate smoothly to the way of specifying probabilistic
transitions systems as proposed in~\cite{DL12,LGD12}.
The representation of distributions over states via the algebraic of
distribution terms, is crucial to adapt the result.
With syntactic grip on distributions in place we are able to
incorporate the definitions of a nesting graph and a wild argument,
%% (Definition~\ref{def:ngraph}),
patience rules, 
%%  (Def.~\ref{def:patience_rule}), 
w-nested context and w-nested position,
%% (Def.~\ref{def:w-nested}) 
and finally the format specification itself.
%%  (Def.~\ref{def:format}).
% 
In addition,
Lemma~\ref{lemma:congruence_closure_on_states_coincides_with_closure_on_distributions}
ensures that relations at the syntactic level lift well to the
semantical level.

Second, the characterization of branching bisimulation without
schedulers of \cite{AGT2012} reduced the complexity of our proofs.
In general, frameworks combining internal transitions and
probabilities, exploit schedulers to define weak combined transitions.
%% (Def.~\ref{def:combined-trans}).
% 
This gives extra overhead in the technical treatment.
Witnessing that working with these notions is not that easy, decision
algorithms for e.g.\ weak probabilistic bisimulation~\cite{HT12} and
weak distribution bisimulation~\cite{EHKTZ2013} are from recent years.

\newtext{
Future work includes an extension of the format, or a new one, 
to ensure that probabilistic branching bisimulation is a congruence. }
It also includes an extension to support quantitative premises.
As explained in Section~\ref{sec:format}, quantitative
premises cannot be used to include look-ahead in the format directly.
However, the two-sorted approach allows rules of the shape
\setlength{\abovedisplayskip}{10pt}
\setlength{\belowdisplayskip}{10pt}
\begin{displaymath}
  \dedrule{ \mu_1 \oplus_{p_1} \delta(\{y_2\}) (\{y_1\}) \gtgeq p_1
  \quad 
  \mu_2 \oplus_{p_2} \delta(\{y_1\}) (\{y_2\}) \gtgeq p_2
}{f(\mu_1, \mu_2) \trans[a] \theta}
\end{displaymath}
In order to achieve this goal, 
techniques introduced for the \ntmufxt\ format~\cite{DL12} can be used. 
% That is, the sets used for measuring
% probabilities, above $\{y_1\}$ and $\{y_2\}$, have to be
% infinite. Moreover, the rule set, regarding occurrences in premises,
% has to be symmetric with respect to these variables.
% % 
% Also, we have to incorporate a well-foundedness
% condition~\cite{Gro93,DL12}.
% % 
% The premises of the last rule are not well-founded because there is 
% a circular dependency between variables $y_1$ and~$y_2$.
% 
% 
Other research investigates the characterization of branching
bisimulation without schedulers to obtain new results in the
probabilistic context. Currently we are working on a logic to
characterize this relation, focusing on completeness.

\medskip

\emph{Acknowledgment.}
The authors would like to thank Pedro R. D'Argenio for helpful discussions.

\bibliographystyle{eptcs}
\bibliography{pbb}

\end{document}